\PassOptionsToPackage{pdftex}{graphicx}
\documentclass{IEEEtran}
\usepackage{cite} 
\usepackage{array}
\usepackage{xcolor}
\usepackage{engord}
\usepackage{bm}
\usepackage{booktabs}
\usepackage{amssymb}
\usepackage{bbm}
\usepackage{blkarray}
\usepackage{graphicx}
\usepackage{balance}
\usepackage{multirow}
\usepackage{longtable}
\usepackage{rotating}
\usepackage{float}
\usepackage{arydshln}
\usepackage{amsthm}
\usepackage[caption=false,font=footnotesize]{subfig}
\usepackage[ruled]{algorithm2e} 
\usepackage[colorlinks=true, allcolors=blue]{hyperref}
\usepackage[most]{tcolorbox}
\usepackage{mathtools}

\newcommand{\mb}[1]{\mathbf{#1}}

\newcommand{\minimize}[1]{\underset{{#1}}{\text{minimize}}}

\newcommand{\st}{\text{subject to}}
\newcommand{\theoremsymbol}{\hfill \ensuremath{\triangleleft}}

\newtheoremstyle{nonboldnonitalic}  
  {}  
  {}  
  {\normalfont}  
  {}  
  {\itshape}  
  {.}  
  {.5em}  
  {}  

\theoremstyle{nonboldnonitalic}
\newtheorem{theorem}{Theorem}

\newtheorem{definition}[theorem]{Definition}

\newtheorem{assumption}{Assumption}
\theoremstyle{remark}

\ifCLASSINFOpdf
   \usepackage[pdftex]{graphicx}
\else
   \usepackage[dvips]{graphicx}
\fi

\makeatletter
\let\old@ps@headings\ps@headings
\let\old@ps@IEEEtitlepagestyle\ps@IEEEtitlepagestyle
\def\psccfooter#1{%
    \def\ps@headings{%
        \old@ps@headings%
        \def\@oddfoot{\strut\hfill#1\hfill\strut}%
        \def\@evenfoot{\strut\hfill#1\hfill\strut}%
    }%
    \def\ps@IEEEtitlepagestyle{%
        \old@ps@IEEEtitlepagestyle%
        \def\@oddfoot{\strut\hfill#1\hfill\strut}%
        \def\@evenfoot{\strut\hfill#1\hfill\strut}%
    }%
    \ps@headings%
}
\makeatother

\begin{document}
\begingroup
\allowdisplaybreaks

\title{
Differentially Private Obfuscation \\ of Power Grid Dynamics
}
\author{Shengyang Wu, \IEEEmembership{Student Member, IEEE}, and Vladimir Dvorkin, \IEEEmembership{Member, IEEE}
\thanks{Shengyang Wu ({\tt syseanwu@umich.edu}) and Vladimir Dvorkin ({\tt dvorkin@umich.edu}) are with the Department of Electrical Engineering and Computer Science, University of Michigan, MI 48109, USA. }}

\maketitle

\begin{abstract}
Dynamic models of power systems are critical for analyzing grid response to disturbances and blackouts, but the release of real-world dynamic models is hindered by privacy and cybersecurity concerns, as such models carry sensitive information about transmission, generation, and load parameters. We develop an algorithm for synthesizing dynamic grid models from real-world power grids balancing two objectives: the privacy of the source grid, quantitatively measured using the notion of differential privacy, and the fidelity of the synthesized model. The algorithm applies privacy-preserving noise to obfuscate the original grid parameters, but then optimizes the perturbed parameters to ensure that the resulting model dynamics are statistically consistent with those observed in the source grid. Application to the frequency dynamics of the IEEE 30-bus system reveals the inherent privacy-fidelity trade-off: stricter privacy requirements degrade modeling fidelity, yet optimization significantly improves the quality of the synthesized models.
\end{abstract}

\begin{IEEEkeywords}
Adjoint method, differential privacy, frequency dynamics, grid obfuscation, Kron reduction.
\end{IEEEkeywords}

\section{Introduction}
\IEEEPARstart{P}{ower} system dynamics governs how the grid responds to disturbances and rapidly changing conditions. With growing complexity from the integration of renewable energy resources \cite{Geng2022,Poolla2019} and AI electricity demand \cite{Li2024}, accurate dynamic models are imperative for resilient control \cite{Li2016, Mallada2017} and post-event analysis of blackouts \cite{ENTSOE_iberian_blackout_2025}. Yet, releasing real-world dynamic models is hindered by privacy and cybersecurity concerns. For example, disclosing generator models enables targeted AGC cyberattacks \cite{Tan2017}, while releasing frequency dynamics models exposes network topology and line parameters, thereby facilitating integrity attacks on power markets \cite{Xie2011}.

Kron reduction techniques \cite{Dorfler2012,Ishizaki2018,chevalier2022towards,mokhtari2025optimal} eliminate selected buses while preserving input-output dynamics, inherently obfuscating network topology and admittance data. However, the eliminated data remain vulnerable to reverse engineering under voltage and current measurements \cite{Low2024,Low2024_2}, while the reduced models can also seed data-driven topology estimation methods capable of recovering the original topology \cite{Deka2024}.

\textit{Differential privacy} (DP) \cite{Dwork2014} provides algorithmic foundations for privacy-preserving data publication. By perturbing the source data with carefully calibrated noise, DP renders any two adjacent datasets statistically indistinguishable in the random outcome of the perturbation, up to prescribed privacy parameters. DP has been leveraged for privacy-preserving releases of voltage and smart meter data streams from distribution systems, as well as records from PV and wind power plants \cite{campbell2026differentiallyprivatesyntheticvoltage,gough2021preserving,ravi2023solar,Dvorkin2023}. To protect grid data used in power system computations, DP has been applied to randomize power flow optimization solvers \cite{Dvorkin2020,Dvorkin2021}, dynamical simulations \cite{11462688}, training of machine learning models \cite{dvorkin2025privacy}, and aggregated market-clearing statistics \cite{Zhou2019}, i.e., applications where sensitive data may be inferred from computational outcomes. 

The key challenge of DP randomization is that the perturbed data may lose modeling fidelity, become infeasible for the underlying analysis, or introduce new system risks. Post-processing optimization addresses this by restoring modeling fidelity while preserving the privacy guarantee, and has been applied to DP release of demand and transmission data to ensure cost-consistent power flow outcomes \cite{Fioretto2020,Dvorkin2023} and to prevent the synthesized data from being used to calibrate cyberattacks \cite{Wu2025}. However, the application of post-processing optimization beyond steady-state power flow to power systems \textit{dynamics} remains largely unexplored. The control community has focused on the privacy-preserving release of measurements \cite{Katewa2020} and inputs \cite{Koufogiannis2017} for general dynamical systems, but no prior work has addressed DP release of the model itself. This gap is critical for power system dynamics, where model owners must simultaneously reproduce the electrical behavior of the source system and protect sensitive parameters (admittances, generation inertia and damping) that define it.

\textit{Contributions:} This work translates the success of DP in steady-state grid optimization \cite{Fioretto2020,Wu2025,Dvorkin2023} to power system dynamics. We develop algorithms for the public release of dynamic models with DP guarantees on sensitive parameters, such as inertia, damping, and admittance, while preserving modeling fidelity. Our technical contributions include:
\begin{enumerate}
    \item We develop a principled algorithm for the public release of power system dynamic models. In the context of grid frequency dynamics, we release the model $\dot{\omega}(t)=\tilde{f}(\omega(t),u(t))$ such that the original $f$ cannot be reverse-engineered from $\tilde{f}$, while the two models produce statistically similar frequency trajectories.
    
    \item To obfuscate grid topology, we introduce a DP version of Kron reduction that injects calibrated noise into the source admittance, making reverse engineering in style of \cite{Low2024,Deka2024} infeasible up to prescribed DP parameters.

    \item We formulate an ODE-constrained optimization to post-process DP grid parameters and recover modeling fidelity lost to DP noise. We propose a solution technique based on the adjoint method \cite{cao2003adjoint}, fitting the DP grid parameters to observable frequency trajectories, thus extending DP post-processing from steady-state grid optimization \cite{Fioretto2020,Wu2025,Dvorkin2023} to dynamical systems.
\end{enumerate}

Our algorithm contributes to the ongoing effort of synthesizing dynamic grid models, including the ENTSO-E model of the Pan-European grid \cite{Semerow2015}, where bus-level dynamics are initialized from standard models and tuned to match mean transients and dominant inter-area oscillation modes under a plant outage event. To support such efforts, we provide rigorous guarantees for source grids. Although focused on frequency dynamics as a case study, the approach generalizes to synchronous generator \cite{Karrari2004}, PSS \cite{Gurrala2010}, AVR \cite{Wu2020}, inverter models \cite{Geng2022}, and to linear systems more broadly.

\textit{Notation:} Lower- (upper-) case boldface letters denote column vectors (matrices); $\top$ stands for transposition;  $\mb{0}/\mb{1}$ are all-zero/one vectors. Scalar $a_{i}$ is the $i^{\text{th}}$ element of vector $\mb{a}$. The total and partial derivatives are denoted by $\mathrm{d}_x$ and $\partial_x$, respectively. Projection of $\mb{a}$ onto $[\underline{a},\overline{a}]$ is denoted $\mathcal{P}[\mb{a}]_{\underline{a}}^{\overline{a}}$.

\section{Preliminaries}

 In this section, we introduce the reference model, discuss the privacy risks of releasing its parameters, and overview the classic Kron-reduction technique for model reduction. 

\subsection{Transmission Frequency Dynamics}

Consider a power transmission network with a connected directed graph $(\mathcal{N}, \mathcal{E})$, with a set $\mathcal{N}$ of nodes (buses) and a set of $\mathcal{E}$ edges (transmission lines).  We define $(i,j)$ as the transmission line between node $i$ and $j$.

We make the following common assumptions \cite{chen2020distributed}:
\begin{enumerate}
    \item All lines $(i,j)\in \mathcal{E}$ are lossless and only characterized by their reactance $x_{ij}$. 
    \item Reactive power injections and flows are ignored.
    \item Voltage magnitude of all buses are fixed. 
\end{enumerate}

Based on these assumptions, this paper describes the linearized dynamics of the transmission network as
\begin{subequations}\label{eq:transmission dynamics}
    \begin{align}
    \mb{M}\dot{\bm{\omega}}&=
    \mb{K}
    \bm{\delta}+
        \mb{p}
    -\bm{\ell}-\mb{D}\bm{\omega}\\
    \dot{\bm{\delta}}&=\bm{\omega}\\
        \dot{\mb{p}}  &= -\mb{T}(\mb{p}+\mb{R}\bm{\omega})
    \end{align}
\end{subequations}
where the state variables include the vector $\bm{\omega}$ of nodal frequencies, the vector $\bm{\delta}$ of nodal voltage phase angles and the vector $\mb{p}$ of mechanical power output of every node, with entries corresponding to  non-generator nodes set to $0$. The disturbance is confined to load deviations, denoted by vector $\bm{\ell}$. All variables are deviations from their nominal values. 

The system design parameters include the diagonal matrices $\mb{M}$ of generator inertia, $\mb{D}$ of nodal damping, $\mb{T}$ with diagonal elements being the inverse of governor/turbine time constant, and $\mb{R}$ of droop control constants; for non-generator nodes, the corresponding entries in $\mb{M}, \mb{T}$ and $\mb{R}$ are zeros. The matrix $\mb{K}$ is a weighted Laplacian with elements
\begin{align}\label{eq:construct K}
    K_{ij}=
    \begin{cases}
        k_{ij}\coloneqq\frac{v_i^0 v_j^0}{x_{ij}} \cos(\delta_i^0 - \delta_j^0), & \text{if}\;(i,j)\in \mathcal{E},\\
        -\sum_{k} k_{ik},\;\forall (i,k)\in \mathcal{E}, & \text{if}\;i=j,\\
        0,  & \text{otherwise},
    \end{cases}
\end{align}
defining the sensitivity of power flows to voltage phase angle deviation from the nominal operating point.

\subsection{Risks of Releasing Model Parameters}

The frequency dynamics model \eqref{eq:transmission dynamics} includes five parameters: the turbine governor parameters $\mb{T}$ and $\mb{R}$, which are typically fixed and known, the generator inertia $\mb{M}$ and damping $\mb{D}$, which are proprietary to generation companies and system operators, and the weighted Laplacian $\mb{K}$, which encodes network topology and line admittances. Releasing the latter three carries privacy and security risks:

\subsubsection{Generator Inertia $\mb{M}$} In markets where inertia is explicitly procured, e.g., the UK stability market \cite{NESO_Inertia_Cost}, disclosing generator inertia exposes market bids for stability contracts and reveals unit commitment status, since committed units are the primary providers of inertia and spinning reserve. From a security perspective, \cite{acharya2020public} demonstrates that inertia data is instrumental for attackers to relocate system eigenvalues to destabilizing positions in the right-half plane.

\subsubsection{Damping Matrix $\mb{D}$} Damping reflects the aggregate effect of voltage control systems on generator response \cite{BergenVittal2000}, exposing proprietary parameters of voltage regulation.

\subsubsection{Weighted Laplacian $\mb{K}$} By construction, $\mb{K}$ encodes network topology and line admittances, and its release facilitates load redistribution attacks \cite{Liu2014,Liang2016}. Kron reduction offers a natural mechanism to obfuscate $\mb{K}$, which we overview next.

\subsection{Kron Reduction on Dynamical Models}

Applied to a network graph, Kron reduction eliminates selected nodes while preserving the electrical behavior at the remaining ones, forming a reduced network. The classic technique reduces the nodal current balance equations \cite{Dorfler2012}, while optimization-based variants can eliminate arbitrary buses while preserving desired network properties \cite{chevalier2022towards,mokhtari2025optimal}. We use Kron reduction to reduce the system in \eqref{eq:transmission dynamics}. Let $\gamma$ denote the set of boundary nodes to be preserved, and $\beta$ denote the interior nodes to be eliminated. We make further assumptions:
\begin{enumerate}
    \item All generator nodes are selected as boundary nodes.
    \item Interior node damping is negligible and can be ignored.
\end{enumerate}

Under these assumptions, equations \eqref{eq:transmission dynamics} are partitioned as
{\setlength{\arraycolsep}{4pt}
\begin{subequations}\label{eq:TSA compact}
    \begin{align}
    &\left[
    \begin{array}{c:c}
    \textbf{M}_{\gamma} & \mb{0} \\
    \hdashline
    \mb{0} & \mb{0}
    \end{array}
    \right]
    \begin{bmatrix}
        \bm{\omega}_\gamma \\
        \bm{\omega}_\beta
    \end{bmatrix}=
    \left[
    \begin{array}{c:c}
    \textbf{K}_{\gamma\gamma} & \textbf{K}_{\gamma\beta} \nonumber\\
    \hdashline
    \textbf{K}_{\beta\gamma} & \textbf{K}_{\beta\beta}
    \end{array}
    \right]
    \begin{bmatrix}
    \bm{\delta}_\gamma \\
    \bm{\delta}_\beta
    \end{bmatrix}+
    \begin{bmatrix}
        \mb{p}_\gamma \\
        \mb{0}
    \end{bmatrix}\\ 
    &\quad\quad\quad\quad\quad\quad\quad\quad\quad\quad\quad -\begin{bmatrix}
        \bm{\ell}_\gamma\\
        \bm{\ell}_\beta
    \end{bmatrix}
    -\left[
    \begin{array}{c:c}
    \textbf{D}_{\gamma} & \mb{0} \\
    \hdashline
    \mb{0} & \mb{0}
    \end{array}
    \right]
    \begin{bmatrix}
        \bm{\omega}_\gamma\\
        \mb{0}
    \end{bmatrix}\\[-0.7em]
    &\begin{bmatrix}
    \dot{\bm{\delta}}_\gamma \\
    \dot{\bm{\delta}}_\beta
    \end{bmatrix}
    = 
    \begin{bmatrix}
    \bm{\omega}_\gamma \\
    \bm{\omega}_\beta
    \end{bmatrix}\\
    &\begin{bmatrix}
        \dot{\mb{p}}_\gamma \\
        \mb{0}
    \end{bmatrix} = 
    \left[
    \begin{array}{c:c}
    \textbf{T}_{\gamma} & \mb{0} \\
    \hdashline
    \mb{0} & \textbf{T}_{\beta}
    \end{array}
    \right]
    \left(\begin{bmatrix}
        \mb{p}_\gamma \\
        \mb{0}
    \end{bmatrix}+
    \left[
    \begin{array}{c:c}
    \textbf{R}_{\gamma} & \mb{0} \\
    \hdashline
    \mb{0} & \textbf{R}_{\beta}
    \end{array}
    \right]
    \begin{bmatrix}
        \bm{\omega}_\gamma\\
        \bm{\omega}_\beta
    \end{bmatrix}
    \right)
    \end{align}
\end{subequations}
where the subscript $\cdot_\gamma$ denotes the variables and parameters corresponding to the boundary nodes, and $\cdot_\beta$ to interior nodes.} 

Kron reduction of \eqref{eq:TSA compact} eliminates the variables and parameters of interior nodes, giving rise to the following electrically equivalent reduced system:
\begin{subequations}\label{eq:ODE reduced}
    \begin{align}
    \mb{M}_\gamma\dot{\bm{\omega}}_\gamma&= \mb{K}_\text{red}\bm{\delta}_\gamma-\mb{D}_\gamma \bm{\omega}_\gamma+(\mb{p}-\bm{\ell}_\gamma)-\mb{K}_\text{ac}\bm{\ell}_\beta \label{eq:ODE reduced_one}\\
    \dot{\bm{\delta}}_\gamma &= \bm{\omega}_\gamma\\
    \dot{\mb{p}}_\gamma
    &= -\mb{T}_\gamma(
        \mb{p}_\gamma
    +\mb{R}\bm{\omega}_\gamma)
\end{align}
\end{subequations}
where $\mb{K}_\text{red}=\mb{K}_{\gamma\gamma}-\mb{K}_{\gamma \beta}\mb{K}_{\beta\beta}^{-1}\mb{K}_{\beta\gamma}$ is the reduced coefficient matrix, and $\mb{K}_\text{ac}=-\mb{K}_{\gamma\beta}\mb{K}_{\beta\beta}^{-1}$ is the accompanying matrix that maps the load disturbance of the reduced nodes onto the boundary nodes. The reduced dynamic model, characterized by $\mb{K}_\text{red}$ and $\mb{K}_\text{ac}$, presents several appealing characteristics that facilitate the release of dynamical models:  1) For a reduced model that embeds parameters of a reduced network, multiple original networks can yield the same reduction, inherently providing a naive obfuscation of topology; 2) The reduced model \eqref{eq:ODE reduced} exhibits the same behavior as the original model \eqref{eq:TSA compact}, as Kron reduction is a linear, lossless operation \cite{Dorfler2012}.

\subsection{Privacy Risks in Kron Reduction -- A Numerical Example}\label{sec:reverse KR}
\begin{figure}
    \centering
    \subfloat[Original 5-bus Network]{%
    \resizebox{0.2\textwidth}{!}{
    \begin{tikzpicture}
    \node [draw=black, line width = 0.75pt, circle, inner sep = 1.5] (5) at (0,0) {5};
    \node [draw=black, line width = 0.75pt, circle, inner sep = 1.5] (1) at (-1.75,1.75) {1};
    \node [draw=black, line width = 0.75pt, circle, inner sep = 1.5] (2) at (1.75,1.75) {2};
    \node [draw=black, line width = 0.75pt, circle, inner sep = 1.5] (4) at (-1.75,-1.75) {4};
    \node [draw=black, line width = 0.75pt, circle, inner sep = 1.5] (3) at (1.75,-1.75) {3};

    \draw [black,line width = 1pt] (1) -- node[pos=0.5,above,sloped,font=\footnotesize] {$k_{15}=4$} (5);
    \draw [black,line width = 1pt] (2) -- node[pos=0.5,above,sloped,font=\footnotesize] {$k_{25}=10$} (5);
    \draw [black,line width = 1pt] (3) -- node[pos=0.5,above,sloped,font=\footnotesize] {$k_{35}=16$} (5);
    \draw [black,line width = 1pt] (4) -- node[pos=0.5,above,sloped,font=\footnotesize] {$k_{45}=10$} (5);
\end{tikzpicture}
}
\label{fig:example KR og}
}
\hfill
    \subfloat[Reduced 4-bus Network]{%
    \resizebox{0.2\textwidth}{!}{
    \begin{tikzpicture}\label{fig:example KR 2}
    \node [draw=black, circle, line width = 0.75pt, inner sep = 1.5] (1) at (-1.75,1.75) {1};
    \node [draw=black, circle, line width = 0.75pt, inner sep = 1.5] (2) at (1.75,1.75) {2};
    \node [draw=black, circle, line width = 0.75pt, inner sep = 1.5] (4) at (-1.75,-1.75) {4};
    \node [draw=black, circle, line width = 0.75pt, inner sep = 1.5] (3) at (1.75,-1.75) {3};

    \draw [black,line width = 1pt] (1) -- node[pos=0.5,above,sloped,font=\footnotesize] {$k_{12}'=1$} (2);
    \draw [black,line width = 1pt] (2) -- node[pos=0.5,below,sloped,font=\footnotesize] {$k_{23}'=4$} (3);
    \draw [black,line width = 1pt] (3) -- node[pos=0.5,above,sloped,font=\footnotesize] {$k_{34}'=4$} (4);
    \draw [black,line width = 1pt] (4) -- node[pos=0.5,below,sloped,font=\footnotesize] {$k_{14}'=1$} (1);
    \draw [black,line width = 1pt] (1) -- node[pos=0.25,above,sloped,font=\footnotesize] {$k_{13}'=1.6$} (3);
    \draw [black,line width = 1pt] (2) -- node[pos=0.75,above,sloped,font=\footnotesize] {$k_{24}'=2.5$} (4);
\end{tikzpicture}
}
        }
    \caption{Kron reduction of a 5-bus network. Bus 5 is the interior node.}
    \label{fig:example KR}
\end{figure}

Figure \ref{fig:example KR} illustrates Kron reduction of a 5-bus network via elimination of one interior node (bus 5), yielding a reduced 4-bus network with equivalent dynamics but a modified nodal and topological structure. Despite this obfuscation, the original network parameters can still be reverse-engineered under certain conditions. Following \cite{Low2024,Low2024_2}, suppose the original topology is known. Our goal is to recover the original parameters $k$ of the source network from the edge parameters $k'$ of the reduced network. The weighted Laplacians of the original and reduced networks are given by
{\setlength{\arraycolsep}{1pt}
\begin{align}
    \mb{K}&=\left[
    \begin{array}{c:c}
    \textbf{K}_{\gamma\gamma} & \textbf{K}_{\gamma\beta} \\
    \hdashline
    \textbf{K}_{\beta\gamma} & \textbf{K}_{\beta\beta}
    \end{array}
    \right]=\left[
\begin{array}{cccc:c}
 -k_{15} & 0       & 0       & 0       & k_{15} \\
 0       & -k_{25} & 0       & 0       & k_{25} \\
 0       & 0       & -k_{35} & 0       & k_{35} \\
 0       & 0       & 0       & -k_{45} & k_{45} \\
 \hdashline
 k_{15}  & k_{25}  & k_{35}  & k_{45}  & S
\end{array}
\right], \nonumber
\end{align}
where $S=-(k_{15}+k_{25}+k_{35}+k_{45})$, and
\begin{align}
\mb{K}_\text{red}&=\left[
\begin{array}{cccc}
 -\sum_{j\neq1} k'_{1j} & k'_{12} & k'_{13} & k'_{14} \\
 k'_{21}          & -\sum_{j\neq 2} k'_{2j} & k'_{23} & k'_{24} \\
 k'_{31}          & k'_{32} & -\sum_{j\neq 3} k'_{3j} & k'_{34} \\
 k'_{41}          & k'_{42}       & k'_{43}   & -\sum_{j\neq 4} k'_{4j}
\end{array}
\right],\nonumber
\end{align}
respectively. 
}
From Kron reduction we know the relationship between the original and Kron-reduced parameters, i.e.,  
$\mb{K}_\text{red}=\mb{K}_{\gamma\gamma}-\mb{K}_{\gamma \beta}S^{-1}\mb{K}_{\gamma \beta}^\top$, giving rise to the following system of equations:
\begin{subequations}\label{eq:example Fx}
\begin{align}
k'_{12}\,(k_{15}+k_{25}+k_{35}+k_{45}) - k_{15}k_{25} = 0 \\
k'_{13}\,(k_{15}+k_{25}+k_{35}+k_{45}) - k_{15}k_{35} = 0 \\
k'_{14}\,(k_{15}+k_{25}+k_{35}+k_{45}) - k_{15}k_{45} = 0 \\
k'_{23}\,(k_{15}+k_{25}+k_{35}+k_{45}) - k_{25}k_{35} = 0 \\
k'_{24}\,(k_{15}+k_{25}+k_{35}+k_{45}) - k_{25}k_{45} = 0 \\
k'_{34}\,(k_{15}+k_{25}+k_{35}+k_{45}) - k_{35}k_{45} = 0
\end{align}
\end{subequations}
which is an overdetermined system with less variables than equations. It has an analytic solution recovering $\mb{k}$ from $\mb{k}_\text{red}$: 
\begin{subequations}
    \begin{align}\label{eq:example sol}
        k_{5m}&=y_m(y_1+y_2+y_3+y_4),\\
        &\text{with} \; y_i=\sqrt{\frac{k'_{ij} k'_{in}}{k_{jn}'}},\quad \forall i\neq j\neq n.
    \end{align}
\end{subequations}
This can be validated by taking the values of $\mb{k}_\text{red}$ in Fig.\ref{fig:example KR 2} into \eqref{eq:example sol}, leading to 
\begin{align*}
    (y_1,y_2,y_3,y_4)&=(\frac{\sqrt{10}}{5},\frac{\sqrt{10}}{2},\frac{4\sqrt{10}}{5},\frac{\sqrt{10}}{2})\\
    (k_{15},k_{25},k_{35},k_{45})&=(4,10,16,10),
\end{align*}
which matches the original parameters in Fig.\ref{fig:example KR og}. This substantiates the risks of recovering the original network data from a Kron-reduced network. The success of analytic recovery depends on the structure of the original and reduced networks: 
\begin{enumerate}
    \item More edges in the reduced network (this example) yield an overdetermined system analogous to \eqref{eq:example Fx}, which can be solved either analytically or via the Gauss-Newton method \cite{Gratton2007} by minimizing the residual.
    \item With the same number of edges in the reduced and original networks, the system of equations is well-determined, and the original network can be recovered analytically. This is the case of the classic Y-$\Delta$ transformation, where the Y-connection is uniquely determined from the $\Delta$-connection.
    \item With fewer edges in the reduced network, the system is underdetermined with infinitely many solutions, and additional information such as measurements is needed to identify a physically meaningful solution.
\end{enumerate}

\section{Differentially Private Release\\ of Frequency Dynamics Model}

To control privacy risks in releasing grid dynamics models, we leverage DP. In this section, we overview the DP fundamentals \cite{Dwork2014} and present the proposed algorithm for the release of grid frequency dynamics model.

\subsection{Background on Differential Privacy}

Given a vector $\mb{x}\in\mathbb{R}^{n}$ stacking the parameters of model \eqref{eq:transmission dynamics},  the DP goal is to produce its synthetic counterpart $\tilde{\mb{x}}$ that yields similar frequency dynamics yet does not disclose the original parameters up to prescribed privacy guarantees. Formally, releasing $\tilde{\mb{x}}$ must make $\mb{x}$ indistinguishable from an adjacent vector $\mb{x}'$ in the following sense. 

\begin{definition}[Adjacency]\label{def:adjacency} Two vectors $\mb{x},\mb{x}'\in\mathbb{R}^{n}$ are $\alpha-$adjacent, for some $\alpha>0$, if $\exists i\in\{1,\dots,n$\}, such that $x_{j}=x_{j}',\forall j\in\{1,\dots,n\}\backslash i$, and $|x_{i}-x_{i}'|\leqslant\alpha$. That is, they are different in one item by at most $\alpha$. \theoremsymbol
\end{definition}

DP guarantees that an adversary cannot distinguish between $\mb{x}$ and $\mb{x}'$ when observing $\tilde{\mb{x}}$, even when it knows all but one element of $\mb{x}$, up to a privacy loss parameter $\varepsilon>0$. The vector-Laplace mechanism is commonly used to establish this guarantee \cite{Dvorkin2023}. Let $\text{Lap}(s)^{n}$ denote an $n-$dimensional zero-mean Laplace random variable with scale parameter $s$. Then, the DP synthetic release takes the form $\tilde{\mb{x}}=\mb{x}+\text{Lap}(\frac{\alpha}{\varepsilon})^{n}$, i.e., the original vector is perturbed by a calibrated Laplace noise. This satisfies the following definition of $\varepsilon-$DP.

\begin{definition}[$\varepsilon-$DP]\label{def:DP} 
The Laplace mechanism is called $\varepsilon-$DP if, for any outcome $\tilde{\mb{x}}$ in the range of the mechanism, and any two $\alpha-$adjacent vectors $\mb{x}$ and $\mb{x}'$, the ratio of probabilities of observing the same outcome is bounded as
\begin{align}
    \left|\log\left(\frac{\text{Pr}[\mb{x}\textcolor{white}{'} + \text{Lap}(\alpha/\varepsilon)^{n}=\tilde{\mb{x}}]}{\text{Pr}[\mb{x}' + \text{Lap}(\alpha/\varepsilon)^{n}=\tilde{\mb{x}}]}\right)\right|\leqslant\varepsilon,
\end{align}
i.e.,  the log-likelihood ratio associated with any outcome $\tilde{\mb{x}}$ under two $\alpha$-adjacent vectors is uniformly bounded by $\varepsilon$.
\theoremsymbol
\end{definition}
A smaller privacy loss $\varepsilon$ makes the outcomes on adjacent vectors more indistinguishable, thus stronger protecting their difference. Stronger guarantees, however, come at the cost of larger noise, as the Laplace scale $\alpha/\varepsilon$ grows as $\varepsilon$ decreases. 

We next present two important properties of DP.  

\begin{theorem}[Sequential composition \cite{Dwork2014}]\label{th:SC} A series of $k$ $\varepsilon_{i}-$DP releases of $\mb{x}$ satisfies $\sum_{i=1}^{k}\varepsilon_{i}-$DP.
\end{theorem}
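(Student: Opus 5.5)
The plan is to prove the statement directly from Definition~\ref{def:DP}, using the fact that the $k$ releases draw independent noise. Let the $i$-th release be $\tilde{\mb{x}}_i=\mb{x}+\mb{z}_i$, with $\mb{z}_i\sim\text{Lap}(\alpha/\varepsilon_i)^n$ and $\mb{z}_1,\dots,\mb{z}_k$ mutually independent. More generally, I would allow any mechanism $\mathcal{M}_i$ whose output density $p_i(\cdot\,|\,\mb{x})$ satisfies the log-likelihood bound of Definition~\ref{def:DP} with parameter $\varepsilon_i$. The joint release is then the tuple $(\tilde{\mb{x}}_1,\dots,\tilde{\mb{x}}_k)$. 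I would interpret ``$\Pr[\cdot=\tilde{\mb{x}}]$'' in Definition~\ref{def:DP} as a density, since the Laplace mechanism is continuous.

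\textbf{Step 1 (factorization).} Because the noises are independent, the joint density of the tuple at an outcome $(\tilde{\mb{x}}_1,\dots,\tilde{\mb{x}}_k)$ factorizes as $\prod_{i=1}^k p_i(\tilde{\mb{x}}_i\,|\,\mb{x})$. The same factorization holds under the adjacent vector $\mb{x}'$.

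\textbf{Step 2 (log-ratio as a sum).} For any two $\alpha$-adjacent $\mb{x},\mb{x}'$, the log of the joint likelihood ratio becomes the sum
\begin{align*}
\sum_{i=1}^k \log\frac{p_i(\tilde{\mb{x}}_i\,|\,\mb{x})}{p_i(\tilde{\mb{x}}_i\,|\,\mb{x}')}.
\end{align*}
By the triangle inequality, its absolute value is at most the sum of the absolute values of the individual terms.

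\textbf{Step 3 (per-term bound).} Each term is bounded by $\varepsilon_i$ by hypothesis. For the Laplace mechanism this can also be checked directly. The density ratio equals $\exp\bigl(\tfrac{\varepsilon_i}{\alpha}(\|\mb{x}'-\tilde{\mb{x}}_i\|_1-\|\mb{x}-\tilde{\mb{x}}_i\|_1)\bigr)$. The reverse triangle inequality gives $\bigl|\|\mb{x}'-\tilde{\mb{x}}_i\|_1-\|\mb{x}-\tilde{\mb{x}}_i\|_1\bigr|\le\|\mb{x}-\mb{x}'\|_1\le\alpha$ by Definition~\ref{def:adjacency}. Summing the per-term bounds yields the total bound $\sum_i\varepsilon_i$, uniformly over outcomes, which is exactly Definition~\ref{def:DP} with parameter $\sum_i\varepsilon_i$.

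\textbf{Main obstacle: adaptive releases.} The hard case is when the $i$-th release may depend on earlier outputs, which is how the composed algorithm is actually used later. Then Step 1 must be replaced by the chain rule for densities, $\prod_i p_i(\tilde{\mb{x}}_i\,|\,\mb{x},\tilde{\mb{x}}_{1:i-1})$. The argument goes through provided each conditional mechanism is $\varepsilon_i$-DP for every fixed history $\tilde{\mb{x}}_{1:i-1}$. I would state this condition explicitly and then rerun Steps 2--3 unchanged.

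A secondary point is measure-theoretic: the definition bounds pointwise densities, not probabilities of sets. So I would note that integrating the pointwise bound over any measurable set $S$ gives $\Pr[\cdot\in S]\le e^{\sum\varepsilon_i}\Pr'[\cdot\in S]$. This recovers the standard event-based form of DP.
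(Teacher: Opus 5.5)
Your argument is correct and is the standard one. The paper gives no proof of its own; it cites Dwork and Roth, whose proof is exactly your argument: factor the joint likelihood over independent releases and multiply the per-release bounds $e^{\varepsilon_i}$. They state it for discrete outputs, and your density version, together with the final integration over sets $S$, handles the continuous Laplace case cleanly.

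One small correction to your framing: the paper's use is non-adaptive. The Laplace draws for $\mb{m}$, $\mb{d}$, $\mb{k}$ are independent, and everything after them (projection, Kron reduction, gradient steps) is covered by post-processing immunity rather than composition. So your adaptive extension, while correct, is not needed here.
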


\begin{theorem}[Post-processing immunity \cite{Dwork2014}]\label{th:PP} If $\tilde{\mb{x}}$ satisfies $\varepsilon$-DP, then $g\circ\tilde{\mb{x}}$, where $g$ is any data-independent post-processing of the query answer, also satisfies $\varepsilon$-DP. 
\end{theorem}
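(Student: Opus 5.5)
The plan is to argue at the level of output distributions. Let $\mathcal{M}(\mb{x})=\mb{x}+\text{Lap}(\alpha/\varepsilon)^{n}$ be the $\varepsilon$-DP release, and let $g$ be a data-independent map. Data-independent means $g$ is a fixed measurable function, or more generally a randomized map whose internal randomness $\xi$ is independent of $\mb{x}$ and of the Laplace noise, and whose description does not depend on $\mb{x}$. For any $\alpha$-adjacent pair $\mb{x},\mb{x}'$ and any measurable set $\mathcal{S}$ of outcomes of $g$, I will show
\begin{align*}
\text{Pr}[g(\mathcal{M}(\mb{x}))\in\mathcal{S}]\leqslant e^{\varepsilon}\,\text{Pr}[g(\mathcal{M}(\mb{x}'))\in\mathcal{S}].
\end{align*}
By the symmetry of adjacency, the reverse inequality follows the same way. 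Together these give the two-sided log-ratio bound of Definition~\ref{def:DP}, now with $g\circ\tilde{\mb{x}}$ as the output.

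First I treat deterministic $g$. The event $\{g(\tilde{\mb{x}})\in\mathcal{S}\}$ equals $\{\tilde{\mb{x}}\in g^{-1}(\mathcal{S})\}$, and $g^{-1}(\mathcal{S})$ is a fixed set that does not depend on the data. The density form of Definition~\ref{def:DP} bounds the ratio of the Laplace densities of $\tilde{\mb{x}}$ pointwise by $e^{\varepsilon}$. Integrating that pointwise bound over $g^{-1}(\mathcal{S})$ gives the set-wise inequality directly. For completeness I will verify the pointwise bound itself. The density ratio is
\begin{align*}
\exp\!\Big(\tfrac{\varepsilon}{\alpha}\big(\|\tilde{\mb{x}}-\mb{x}'\|_1-\|\tilde{\mb{x}}-\mb{x}\|_1\big)\Big),
\end{align*}
and by the triangle inequality its exponent is at most $\tfrac{\varepsilon}{\alpha}\|\mb{x}-\mb{x}'\|_1\leqslant\varepsilon$, since adjacent vectors differ in one coordinate by at most $\alpha$.

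Next I treat randomized $g$ by conditioning on its independent randomness $\xi$. For each fixed $\xi$, the map $g(\cdot,\xi)$ is deterministic, so the previous step gives the $e^{\varepsilon}$ bound. Taking expectation over $\xi$ preserves the inequality, because $\xi$ has the same distribution under $\mb{x}$ and under $\mb{x}'$. This step is exactly where data-independence is used.

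The only delicate point is technical. The point-probability notation in Definition~\ref{def:DP} must be read as densities, and the argument must pass through preimages of measurable sets. When $g$ is not injective, as with projections or the optimization-based post-processing used later in the paper, the output of $g$ may have no density or may contain atoms. The set-wise formulation avoids this issue, so I will state the conclusion in set form and note that it implies the density-ratio form whenever densities exist.
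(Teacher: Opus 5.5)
Your proof is correct and is the standard argument of the cited reference \cite{Dwork2014}; the paper states Theorem~\ref{th:PP} without proof and defers to that reference. In both, a deterministic $g$ is handled through the fixed preimage $g^{-1}(\mathcal{S})$ combined with the mechanism's $e^{\varepsilon}$ bound, and a randomized $g$ through its independent internal randomness, which is the same as Dwork and Roth's mixture-of-deterministic-maps step. Stating the conclusion in set form is also the right way to make the point-probability wording of Definition~\ref{def:DP} meaningful for non-injective post-processing such as projections.
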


The former bounds the privacy loss across multiple computations on the original data, and the latter states that any data-independent transformation of the DP release answer preserves the privacy guarantee.  The two are important building blocks.

\subsection{DP Release of Frequency Dynamics Models}\label{subsec:dp_release}

We seek to synthesize a DP counterpart of \eqref{eq:ODE reduced_one}, of the form:
\begin{align}
    \tilde{\mb{M}}_\gamma^0\dot{\bm{\omega}}_\gamma&= \tilde{\mb{K}}_\text{red}^0\bm{\delta}_\gamma-\tilde{\mb{D}}_\gamma^0 \bm{\omega}_\gamma+(\mb{p}-\bm{\ell}_\gamma)-\tilde{\mb{K}}_\text{ac}^0\bm{\ell}_\beta,\label{eq:dp_system}
\end{align}
such that the synthetic inertia, damping, and Kron-reduced Laplacian matrices are indistinguishable from their source counterparts, while producing statistically similar frequency trajectories. Towards the goal, we apply the vector-Laplace mechanism from above. Let vectors $\mb{m}=\{M_i,\forall i\in \gamma\}$ and $\mb{d}=\{D_i,\forall i\in \gamma\}$ collect the diagonal elements of $\mb{M}$ and $\mb{D}$, respectively, and let $\mb{k}$ collect the edge weights forming $\mb{K}$. The DP obfuscation proceeds in the following two steps:
\begin{subequations}\label{eq:Lap_perturbation}
\begin{enumerate}
    \item \textit{Inertia and damping obfuscation:} for some adjacency $\alpha_{m},\alpha_{d}>0$, apply the vector-Laplace mechanism:
        \begin{align}
            \tilde{\mb{m}}^0&=\mb{m} + \text{Lap}\left(\frac{\alpha_m}{\varepsilon/3}\right)^{\!\!|\gamma|\!\!}\!\!, \; \tilde{\mb{d}}^0=\mb{d} + \text{Lap}\left(\frac{\alpha_d}{\varepsilon/3}\right)^{\!\!|\gamma|}\!\!\!\!\!\!\label{eq:lap input pertubation}
        \end{align}
    to get perturbed $\tilde{\mb{M}}_\gamma^0=\text{diag}[\tilde{\mb{m}}^0]$ and $\tilde{\mb{D}}_\gamma^0=\text{diag}[\tilde{\mb{d}}^0].$
    \item \textit{Network obfuscation:} for some adjacency $\alpha_{k}>0$, apply the vector-Laplace mechanism to original weights:
    \begin{align}
        \tilde{\mb{k}}^0&=\mathcal{P}_{\mathbb{R}_{>0}^{|\mathcal{E}|}}\left[\mb{k} + \text{Lap}\left(\frac{\alpha_k}{\varepsilon/3}\right)^{|\mathcal{E}|}\right]\label{eq:step2}
    \end{align}
    to get the perturbed weighted Laplacian $\tilde{\mb{K}}^0$ as in \eqref{eq:construct K}. Then, perform Kron reduction of $\tilde{\mb{K}}^0$ w.r.t. selected boundary and interior nodes and get $\tilde{\mb{K}}_\text{red}^0$ and $\tilde{\mb{K}}_\text{ac}^0.$
\end{enumerate}
\end{subequations}

The second step is the DP Kron reduction. Since the perturbation is applied directly to the edge weights prior to forming the Laplacian, the resulting $\tilde{\mb{K}}$ preserves the symmetry condition required for Kron reduction. Since the Laplace noise is unbounded, the perturbed weights $\tilde{\mb{k}}$ may take negative values, violating the physical requirement of positive line admittances and breaking the nonsingularity required for Kron reduction. To resolve this issue, the perturbed weights are projected onto strictly positive. 

The release of the perturbed model \eqref{eq:dp_system} is accompanied by an $\varepsilon$-DP guarantee per the following result. 

\begin{theorem}[$\varepsilon$-DP of frequency dynamics model]
    Steps 1--2 produce an $\varepsilon$-DP release of the frequency dynamics model for $\alpha_k$-adjacent power flow sensitivity, $\alpha_m$-adjacent inertia, and $\alpha_d$-adjacent damping vectors. \theoremsymbol \label{th:DP} 
\end{theorem}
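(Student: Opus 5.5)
The argument has three parts. First, each of the three Laplace perturbations in \eqref{eq:Lap_perturbation} is $(\varepsilon/3)$-DP for its own adjacency relation. Second, projection, Laplacian construction and Kron reduction are post-processing. Third, the three releases combine by sequential composition (Theorem~\ref{th:SC}).

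\emph{Single Laplace release.} Let $\mb{x}$ stand for any one of $\mb{m}$, $\mb{d}$, $\mb{k}$, with adjacency $\alpha$ and scale $s=3\alpha/\varepsilon$. Take $\mb{x}$ and $\mb{x}'$ to be $\alpha$-adjacent in the sense of Definition~\ref{def:adjacency}, differing only in coordinate $i$. The noise coordinates are independent, so the density ratio at any outcome $\tilde{\mb{x}}$ factorizes, and every factor except the $i$-th equals one. The log of the ratio is therefore
\begin{align*}
\frac{|\tilde x_i - x_i'| - |\tilde x_i - x_i|}{s},
\end{align*}
which the triangle inequality bounds in absolute value by $|x_i-x_i'|/s \leqslant \alpha\varepsilon/(3\alpha)=\varepsilon/3$. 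This is Definition~\ref{def:DP} with privacy loss $\varepsilon/3$. Applied three times, it gives $(\varepsilon/3)$-DP for $\tilde{\mb{m}}^0$ with respect to $\alpha_m$-adjacency, for $\tilde{\mb{d}}^0$ with respect to $\alpha_d$, and for the unprojected $\mb{k}+\mathrm{Lap}(3\alpha_k/\varepsilon)^{|\mathcal{E}|}$ with respect to $\alpha_k$.

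\emph{Post-processing.} The map from the noisy edge weights to $(\tilde{\mb{K}}^0_\text{red},\tilde{\mb{K}}^0_\text{ac})$ has three stages: the projection $\mathcal{P}_{\mathbb{R}_{>0}^{|\mathcal{E}|}}$, the Laplacian assembly \eqref{eq:construct K}, and the Schur-complement Kron reduction. None of these stages uses $\mb{k}$ itself. The assembly needs the topology and nominal operating point, which I treat as public and fixed, and the boundary/interior split $(\gamma,\beta)$ is also fixed in advance. By Theorem~\ref{th:PP}, the released $\tilde{\mb{K}}^0_\text{red}$ and $\tilde{\mb{K}}^0_\text{ac}$ therefore inherit $(\varepsilon/3)$-DP. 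Similarly, $\tilde{\mb{M}}^0_\gamma=\mathrm{diag}[\tilde{\mb{m}}^0]$ and $\tilde{\mb{D}}^0_\gamma=\mathrm{diag}[\tilde{\mb{d}}^0]$ are data-independent transformations of DP outputs. The remaining quantities in \eqref{eq:dp_system}, namely $\mb{T},\mb{R},\mb{p},\bm{\ell}$, are not private, as noted in the discussion of release risks.

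\emph{Composition and the main obstacle.} Theorem~\ref{th:SC} combines the three $(\varepsilon/3)$-DP releases into $\varepsilon/3+\varepsilon/3+\varepsilon/3=\varepsilon$. The main obstacle is making this step rigorous. The three mechanisms are DP for different adjacency relations on different sub-vectors, whereas Theorem~\ref{th:SC} is stated for releases of a single $\mb{x}$. I would handle this by stacking $\mb{x}=(\mb{m},\mb{d},\mb{k})$ and defining adjacency coordinatewise, with tolerance $\alpha_m$, $\alpha_d$ or $\alpha_k$ depending on which block the differing entry belongs to.

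If $\mb{x}$ and $\mb{x}'$ differ in one entry, that entry lies in exactly one block. The two mechanisms acting on the other blocks then have identical output distributions under $\mb{x}$ and $\mb{x}'$. The joint log-likelihood ratio of the three independent releases is a sum of three terms, each bounded by $\varepsilon/3$, and two of the three are zero. This yields $\varepsilon$-DP, and in fact the tighter $\varepsilon/3$. It also justifies the composition bound directly, without relying on the informal statement of Theorem~\ref{th:SC}.

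Finally, one point needs care: the projection onto the open set $\mathbb{R}_{>0}$ must be a well-defined measurable map, for example clipping at a small fixed $\epsilon_0>0$. This ensures the post-processing argument applies and that $\tilde{\mb{K}}^0_{\beta\beta}$ is invertible for a connected graph. Even without invertibility, any deterministic, data-independent convention would preserve privacy.
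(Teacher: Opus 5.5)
Your proposal is correct and has the same skeleton as the paper's proof. Each of $\mb{m}$, $\mb{d}$, $\mb{k}$ gets an $(\varepsilon/3)$-DP Laplace release. Post-processing immunity covers the diagonalization, projection, Laplacian assembly and Kron reduction. Composition then gives $\varepsilon$.

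The real difference is the composition step. The paper applies Theorem~\ref{th:SC} twice: $\tfrac{2}{3}\varepsilon$ for $(\mb{m},\mb{d})$, then $+\tfrac{1}{3}\varepsilon$ for $\mb{k}$. You instead bound the joint likelihood ratio of the three independent releases directly, on the stacked vector with block-dependent tolerances. The paper's route is shorter and modular. Yours makes explicit the adjacency relation on the composite data. Theorem~\ref{th:SC} is stated for repeated releases of one $\mb{x}$ under a single adjacency, so it leaves this implicit.

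Your route also exposes that under single-entry adjacency the release is actually $(\varepsilon/3)$-DP, i.e., parallel rather than sequential composition. The $\varepsilon/3$ budget split, and the paper's total of $\varepsilon$, are needed only under the stronger reading in which $\mb{m}$, $\mb{d}$ and $\mb{k}$ may each differ in one entry simultaneously. Your sum-of-three bound gives exactly $\varepsilon$ in that case too. The theorem's wording admits that reading, so you should state this case explicitly rather than only the single-entry one.

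Two smaller points are in your favor. First, you do not rely on linearity. The paper calls projection and Kron reduction ``linear'', which they are not, but post-processing immunity needs only data-independence. Second, your remark that projection onto the open orthant is ill-posed is consistent with Algorithm~\ref{alg:pp-GD}, which clips to the box $[\underline{\mb{k}},\overline{\mb{k}}]$ instead.
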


\begin{proof}
    Per the Laplace mechanism (Definition \ref{def:DP}) and sequential composition (Theorem \ref{th:SC}), the computation in \eqref{eq:lap input pertubation} is $\tfrac{2}{3}\varepsilon$-DP. Forming the diagonal matrices in Step 1 preserves this guarantee by post-processing immunity (Theorem \ref{th:PP}). Similarly, the $\tfrac{1}{3}\varepsilon$-DP guarantee of Step 2 is preserved under the projection in \eqref{eq:step2} and subsequent Kron reduction, both of which are data-independent linear post-processing operations. Sequential composition of the two steps accumulates a total privacy loss of $\left(\tfrac{1}{3}+\tfrac{2}{3}\right)\varepsilon$, yielding the $\varepsilon$-DP guarantee. 
\end{proof}

\subsection{Post-Processing Optimization for Modeling Fidelity}

Theorem \ref{th:DP} guarantees privacy but not fidelity: the dynamics of \eqref{eq:dp_system} may deviate significantly from those of \eqref{eq:ODE reduced_one}. To ensure modeling fidelity, we propose to post-process synthetic inertia, damping and network parameters while preserving the privacy guarantee of Theorem \ref{th:DP}. That is, we seek the smallest adjustment of $\tilde{\mb{m}}^0, \tilde{\mb{d}}^0$ and $\tilde{\mb{k}}^0$ that replicates the frequency dynamics of the source model. We make the following assumption.
\begin{assumption}[Public knowledge]
    The frequency trajectories $\hat{\bm{\omega}}(0),\dots, \hat{\bm{\omega}}(T)$ of the source model are publicly observable.
\end{assumption}
This assumption is reasonable, as frequency trajectories under severe transients are already publicly released, e.g., after the 2025 Iberian Peninsula blackout \cite{ENTSOE_iberian_blackout_2025}. By post-processing immunity (Theorem \ref{th:PP}), fitting the DP model to such public trajectories does not affect the DP guarantee of Theorem \ref{th:DP}.

The proposed post-processing optimization takes the form:
\begin{subequations}\label{eq:pp full}
    \begin{align}
    \minimize{\tilde{\mb{k}},\tilde{\mb{m}}, \tilde{\mb{d}},\bm{\omega},\bm{\delta},\mb{p}}&   \int_0^T \!\!\left\|\bm{\omega}(t)-\hat{\bm{\omega}}(t)\right\|_2^2 \mathrm{d}t 
    + \rho
    \left\|
    \begin{bmatrix*}
        \tilde{\mb{k}} - \tilde{\mb{k}}_\text{r}^0\\
        \tilde{\mb{m}} - \tilde{\mb{m}}^0\\
        \tilde{\mb{d}} - \tilde{\mb{d}}^0
    \end{bmatrix*}
    \right\|_{2}^2 \label{eq:pp full obj}\\[-0.3em] 
    \st& \nonumber\\
    \text{diag}[\tilde{\mb{m}}]&\dot{\bm{\omega}}(t)= \mb{C}_{\text{r}}\text{diag}[\tilde{\mb{k}}]\mb{C}_{\text{r}}^{\top}\bm{\delta}(t) - \text{diag}[\tilde{\mb{d}}]\bm{\omega}(t) \nonumber\\
    &\textcolor{white}{\dot{\bm{\omega}}(t)=}\quad\quad\quad\quad\;\;+\mb{p}(t)-\bm{\ell}_\gamma - \tilde{\mb{K}}_\text{ac}^0\bm{\ell}_\beta, \label{eq:pp full con 1}\\
    &\dot{\bm{\delta}}(t) = \bm{\omega}(t), \;\! \dot{\mb{p}} (t) = \!-\mb{T}(\mb{p}(t) +\mb{R}\bm{\omega}(t)),\\
    &\bm{\omega}(0) = 0, \; \bm{\delta}(0) = 0, \; \mb{p}(0) = 0.  \label{eq:pp full con 3}
    \end{align}
\end{subequations}
which jointly optimizes the model parameters $\tilde{\mb{k}},\tilde{\mb{m}}, \tilde{\mb{d}}$ and the frequency states $\bm{\omega},\bm{\delta},\mb{p}$ coupled through the ODE constraints \eqref{eq:pp full con 1}--\eqref{eq:pp full con 3}. The objective minimizes the distance between the public $\hat{\bm{\omega}}$ and resulting $\bm{\omega}$ frequency trajectories over a period $T$, improving modeling fidelity, with a regularization term weighted by $\rho>0$ anchoring the solution to the Laplace perturbation \eqref{eq:Lap_perturbation}. Constraint \eqref{eq:pp full con 1} replicates the swing equation \eqref{eq:dp_system}, where $\tilde{\mb{m}}$ and $\tilde{\mb{d}}$ are diagonalized to form the inertia and damping matrices, and the Kron-reduced Laplacian $\tilde{\mb{K}}_{\text{red}}=\mb{C}_{\text{r}}\text{diag}[\tilde{\mb{k}}]\mb{C}_{\text{r}}^{\top}$ is reconstructed from the optimized edge weights $\tilde{\mb{k}}$ and the node-edge incidence matrix $\mb{C}_{\text{r}}$ of the reduced network. For simplicity, the accompanying matrix $\tilde{\mb{K}}_\text{ac}$ is fixed to its DP value $\tilde{\mb{K}}_\text{ac}^0$ from Step 2 in \eqref{eq:Lap_perturbation} and is not post-processed; accordingly, $\tilde{\mb{k}}$ in \eqref{eq:pp full} refers exclusively to the edge weights of $\tilde{\mb{K}}_{\text{red}}$, with $\tilde{\mb{k}}_\text{r}^0$ from $\tilde{\mb{K}}_{\text{red}}^0=\mb{C}_{\text{r}}\text{diag}[\tilde{\mb{k}}_{\text{r}}^{0}]\mb{C}_{\text{r}}^{\top}$ serving as the regularization reference. 

The ODE-constrained problem \eqref{eq:pp full} is not amenable to off-the-shelf solvers. We propose a gradient-based solution via the adjoint method \cite{cao2003adjoint}. Let $\bm{\theta}=(\tilde{\mb{k}},\tilde{\mb{m}}, \tilde{\mb{d}})$ collect the model parameters and $\mb{x}=(\bm{\omega},\bm{\delta},\mb{p})$ the system states. Problem \eqref{eq:pp full} is written compactly as:
\begin{align}
    \minimize{\bm{\theta},\mb{x}}\;& J(\mb{x},\bm{\theta}) \coloneqq \int_0^T \!\!\!\!\left\|\mb{H}\mb{x}(t)\!-\hat{\bm{\omega}}(t)\right\|_2^2\mathrm{d}t  + \rho\|\bm{\theta}-\tilde{\bm{\theta}}^0\|_2^2 \nonumber\\
    \st \quad&  \dot{\mb{x}}(t)= f(\mb{x}(t),\bm{\theta}),  \quad \mb{x}(0)=\mb{0}, \label{eq:pp compact}
\end{align}
where $\mb{H}$ extracts the frequency states from $\mb{x}$. The proposed algorithm initializes $\bm{\theta}$ from the Laplace perturbation outcome $\tilde{\bm{\theta}}^0$ in \eqref{eq:Lap_perturbation} and updates it via gradient descent:
\begin{align}
    \bm{\theta} \leftarrow \bm{\theta} - \eta\, \mathrm{d}_{\bm{\theta}} J(\mb{x}, \bm{\theta}),\label{eq:theta_update}
\end{align}
where $\eta > 0$ is the step size  (learning rate). By the chain rule, the total gradient $\mathrm{d}_{\bm{\theta}} J$ decomposes as:
\begin{align}
    \mathrm{d}_{\bm{\theta}} J = \int_0^T \!\!\left(\partial_{\mathbf{x}} l\, \mathrm{d}_{\bm{\theta}} \mathbf{x} + \partial_{\bm{\theta}} l\right) \mathrm{d}t 
    \label{eq:grad_J}
\end{align}
where $l(\mathbf{x},\bm{\theta}) = \|\mathbf{H}\mathbf{x}(t) - \hat{\bm{\omega}}(t)\|_2^2 + \frac{\rho}{T}\|\bm{\theta} - \tilde{\bm{\theta}}^0\|_{2}^{2}$. The term $\partial_{\bm{\theta}} l$ is straightforward to compute. The difficulty lies in $\mathrm{d}_{\bm{\theta}} \mathbf{x}$, which captures how the states respond to parameter changes through the ODE constraint in \eqref{eq:pp compact}. Differentiating \eqref{eq:pp compact} with respect to $\bm{\theta}$ yields the sensitivity ODE:
\begin{align}
    \mathrm{d}_t \mathrm{d}_{\bm{\theta}} \mathbf{x} = \partial_{\mathbf{x}} f\, \mathrm{d}_{\bm{\theta}} \mathbf{x} + \partial_{\bm{\theta}} f, \quad \mathrm{d}_{\bm{\theta}}\mathbf{x}(0) = \mathbf{0}, \label{eq:sens_ODE}
\end{align}
which must be integrated forward in time for each component of $\bm{\theta}$. Since $|\bm{\theta}|$ grows with network size, this is computationally intractable at scale. The adjoint method resolves this by replacing the $|\bm{\theta}|$ sensitivity ODEs with a single adjoint ODE of the same dimension as $\mathbf{x}$, from which the integral term in \eqref{eq:grad_J} is recovered without computing $\mathrm{d}_{\bm{\theta}} \mathbf{x}$ explicitly.

To derive the total derivative via the adjoint method, we formulation the Lagrangian function of \eqref{eq:pp compact}:
\begin{align}
    \mathcal{L} \coloneqq \int_{0}^{T} \big[ l(\mb{x},\bm{\theta}) - \bm{\lambda}^{\top} \left(\dot{\mb{x}}-f(\mb{x},\bm{\theta})\right) \big] \mathrm{d}t \;- \bm{\mu}^{\top}\mb{x}(0),\label{eq:lagrangian}
\end{align}
where $\bm{\lambda}$ is the time-dependent vector of Lagrange multipliers (also termed \textit{adjoint variables}) associated with the ODE, and $\bm{\mu}$ is the time-invariant multiplier associated with the initial condition. Since constraints of \eqref{eq:pp compact} are equalities,  we have $\mathrm{d}_{\bm{\theta}} \mathcal{L}=\mathrm{d}_{\bm{\theta}} J$, so the total derivative is expressed from \eqref{eq:lagrangian} as
\begin{align}
    \mathrm{d}_{\bm{\theta}} J&=\int_{0}^{T}\!\!\! \left[ 
    \partial_{\mb{x}} l \,\mathrm{d}_{\bm{\theta}}\mb{x} + \partial_{\bm{\theta}}l -\bm{\lambda}^\top(\mathrm{d}_{\bm{\theta}}\dot{\mb{x}}- \partial_{\mb{x}}f \,\mathrm{d}_{\bm{\theta}}\mb{x} - \partial_{\bm{\theta}}f)   
    \right] \mathrm{d}t \nonumber \\ 
    &\quad- \bm{\mu}^\top \mathrm{d}_{\bm{\theta}}\mb{x}(0) \nonumber\\ 
    &=\int_{0}^{T}\!\!\! \left[ 
    \partial_{\mb{x}} l \mathrm{d}_{\bm{\theta}}\mb{x} + \partial_{\bm{\theta}}l + \bm{\lambda}^\top (\partial_{\mb{x}}f \mathrm{d}_{\bm{\theta}}\mb{x} + \partial_{\bm{\theta}}f) \right] \mathrm{d}t \nonumber \\ 
    &\quad- \bm{\mu}^\top \mathrm{d}_{\bm{\theta}}\mb{x}(0) - \bm{\lambda}^\top  \mathrm{d}_{\bm{\theta}}\mb{x} (t) \Big|_{0}^{T} + \int_{0}^{T}\!\!\! \big[ \dot{\bm{\lambda}}^{\top} \mathrm{d}_{\theta}\mb{x}\big]\mathrm{d}t \nonumber\\
    &=\int_{0}^{T}\!\!\!\! \big[ ( \partial_{\mb{x}} l \!+\! \bm{\lambda}^\top\partial_{\mb{x}}f \!+\! \dot{\bm{\lambda}})^{\top}\mathrm{d}_{\bm{\theta}}\mb{x}\big] \mathrm{dt} \!+\!\! \int_{0}^{T}\!\!\!\! \big[\partial_{\bm{\theta}}l \!+\!\bm{\lambda}^\top\partial_{\bm{\theta}}f \big] \mathrm{d}t \nonumber\\
    &\quad- \bm{\lambda}(T)^{\top} \mathrm{d}_{\bm{\theta}}\mb{x}(T) + (\bm{\lambda}(0)-\bm{\mu})^\top \mathrm{d}_{\bm{\theta}}\mb{x}(0),\label{eq:adjoint total derivative}
\end{align}
which still contains the gradient $\mathrm{d}_{\bm{\theta}} \mathbf{x}$ difficult to compute. We eliminate it using the following adjoint equations \cite{cao2003adjoint}:
\begin{subequations}\label{eq:adjoint equations}
    \begin{align}
        \partial_{\mb{x}} l + \bm{\lambda}^\top\partial_{\mb{x}}f + \dot{\bm{\lambda}}=\mb{0},\label{eq:adjoint_ode}\\
        \bm{\lambda}(T)=\mb{0},\quad \bm{\lambda}(0)=\bm{\mu},
    \end{align}
\end{subequations}
where \eqref{eq:adjoint_ode} is an ODE for the adjoint variable $\boldsymbol{\lambda}(t)$, 
integrated backward in time from $t=T$ to $t=0$ with terminal condition 
$\boldsymbol{\lambda}(T)=\mathbf{0}$, and $\boldsymbol{\mu} = \boldsymbol{\lambda}(0)$ 
is the multiplier on the initial state constraint. Plugging \eqref{eq:adjoint equations} into  \eqref{eq:adjoint total derivative} simplifies the gradient as 
\begin{align}
    \mathrm{d}_{\bm{\theta}} J = \int_{0}^{T} \big[\partial_{\bm{\theta}}l + \bm{\lambda}^\top\partial_{\bm{\theta}}f \big] \mathrm{d}t,\label{eq:gradient}
\end{align}
where $\partial_{\bm{\theta}}l$ and $\partial_{\bm{\theta}}f$ are straightforward partial derivatives of known functions, and $\bm{\lambda}(t)$ is obtained by solving the adjoint ODE \eqref{eq:adjoint_ode}  backward in time. The total cost of computing the gradient is one forward ODE solve for $\mb{x}(t)$ and one backward ODE solve for $\bm{\lambda}(t)$, regardless of the dimension of $\bm{\theta}$. 

\subsection{The Algorithm}
\begin{algorithm}[t!]
\small
\KwIn{Source model: $\mb{k},\mb{m},\mb{d};$ graph data: $\beta,\gamma;$ public data $\hat{\bm{\omega}};$ DP parameters: $\alpha, \varepsilon;$ optimization data: $\rho, \eta, I.$
  }

  \vspace{0.25em}
  {\footnotesize \textbf{1}} Initial Laplace obfuscation of the source model
  \begin{subequations}
  \begin{align}
      &\tilde{\mb{k}}^0=\mathcal{P}\big[\mb{k} + \text{Lap}\left(\alpha_k/(\varepsilon/3)\right)^{|\mathcal{E}|}\big]_{\underline{\mb{k}}}^{\overline{\mb{k}}}, \\
      &\tilde{\mb{m}}^0\!\!=\mathcal{P}\big[\mb{m} + \text{Lap}\left(\alpha_m/(\varepsilon/3)\right)^{|\gamma|}\big]_{\underline{\mb{m}}}^{\overline{\mb{m}}}, \\
      &\tilde{\mb{d}}^0=\mathcal{P}\big[\mb{d} + \text{Lap}\left(\alpha_d/(\varepsilon/3)\right)^{|\gamma|}\big]_{\underline{\mb{d}}}^{\overline{\mb{d}}}
  \end{align}
  \end{subequations}

  {\footnotesize \textbf{2}} Kron reduction of obfuscated model $\tilde{\mb{K}} = \mb{C}\,\text{diag}[\tilde{\mb{k}}^0]\,\mb{C}^{\top}$
    \begin{subequations}
        \begin{align}
            \tilde{\mb{K}}_\text{red}^0& \leftarrow \tilde{\mb{K}}_{\gamma\gamma}-\tilde{\mb{K}}_{\gamma \beta}\tilde{\mb{K}}_{\beta\beta}^{-1}\tilde{\mb{K}}_{\beta\gamma}, \quad \tilde{\mb{K}}_\text{ac} \leftarrow -\tilde{\mb{K}}_{\gamma\beta}\tilde{\mb{K}}_{\beta\beta}^{-1}
        \end{align}
    \end{subequations}

  {\footnotesize \textbf{3}} Post-processing optimization
    \For{$i = 1,\dots,I$}{
        \vspace{0.25em}
        Compute the gradient $\mathrm{d}_{\bm{\theta}} J^i$ of \eqref{eq:pp full} using \eqref{eq:gradient} 

        \vspace{0.25em}
        Take the projected gradient step 
        \begin{subequations}\label{eq:grad g}
            \begin{align}
                \tilde{\mb{k}}_\text{r}^{i} &\leftarrow \mathcal{P}\big[\tilde{\mb{k}}_\text{r}^{i-1} - \eta \cdot \mathrm{d}_{\mb{k}_\text{r}} J^i\big]_{\underline{\mb{k}}}^{\overline{\mb{k}}}\\
                \tilde{\mb{m}}^{i} &\leftarrow \mathcal{P}\big[\tilde{\mb{m}}^{i-1} - \eta \cdot \mathrm{d}_{\mb{m}} J^i\big]_{\underline{\mb{m}}}^{\overline{\mb{m}}}\\
                \tilde{\mb{d}}^{i} &\leftarrow \mathcal{P}\big[\tilde{\mb{d}}^{i-1} - \eta \cdot \mathrm{d}_{\mb{d}} J^i\big]_{\underline{\mb{d}}}^{\overline{\mb{d}}}
            \end{align}
        \end{subequations}
        \vspace{-0.5cm}
}
\KwOut{Laplacian $\tilde{\mb{K}}_{\text{red}}=\mb{C}_\text{r} \operatorname{diag}(\tilde{\mb{k}}_\text{r}^{I})\mb{C}_\text{r}^{\top}$, acc. matrix $\tilde{\mb{K}}_{\text{ac}}$, inertia $\tilde{\mb{M}}_{\gamma}=\text{diag}[\tilde{\mb{m}}^{I}]$, damping $\tilde{\mb{D}}_{\gamma}=\text{diag}[\tilde{\mb{d}}^{I}]$}
  \caption{DP Release of Grid Dynamics Model}\label{alg:pp-GD}
\end{algorithm}

The algorithm for DP release of the frequency dynamics model is summarized in Alg.~\ref{alg:pp-GD}. It takes the parameters $(\mb{k},\mb{m},\mb{d})$ of the source model, the sets of boundary and interior nodes $\gamma$ and $\beta$, the public frequency trajectories $\hat{\bm{\omega}}$, all from the source grid. It further requires the adjacency parameters $\alpha=(\alpha_k,\alpha_m,\alpha_d)$ and the privacy loss $\varepsilon$. Finally, it takes the regularization parameter $\rho$, the gradient descent step size $\eta$ and the maximum number of iterations $I$.

Step 1 applies the vector-Laplace mechanism for initial obfuscation, projecting the perturbed parameters onto physically valid ranges. Step 2 performs Kron reduction of the obfuscated network. Step 3 post-processes the obfuscated parameters via gradient descent to recover modeling fidelity, projecting iterates onto the same valid ranges. Upon termination, the algorithm outputs DP parameters of the reduced system \eqref{eq:dp_system}.

The algorithm preserves the privacy guarantee of Theorem~\ref{th:DP}. The DP guarantee is established at Step~1, where projection onto physically valid parameter ranges preserves it by post-processing immunity (Theorem~\ref{th:PP}). Kron reduction and projected gradient descent in Steps~2 and~3 operate on public data only and therefore do not diminish the guarantee.

\section{Numerical Experiments}

\subsection{Settings}

\textit{System:} We use the IEEE 30-bus system with 6 generators and 21 loads with a total load of $283.4$~MW at the nominal operating point. The dynamic model parameters are set uniformly: generator inertia $M=10$~p.u.\,power$\cdot$s/p.u.\,frequency, damping $D=5$~p.u.\,power/p.u.\,frequency, governor time constant $T=20$~s$^{-1}$, and droop $R=0.04$~p.u. For topology obfuscation, the original 30-bus network with 37 transmission lines (Fig. \ref{fig:IEEE 30 system og}) is Kron-reduced treating generator buses as boundary nodes.  The reduced network comprises of 6 buses and 15 synthetic transmission lines (Fig. \ref{fig:IEEE 30 system red}). To simulate dynamics, we use Euler's method with a time step of $0.01$~s.

\textit{DP parameters:} The adjacency is set to $\alpha_k=\alpha_m=\alpha_d=1$~p.u. for power flow sensitivities, inertia, and damping, respectively. Three privacy loss levels $\varepsilon\in\{0.5,1,2\}$ represent strong, medium, and weak privacy guarantees, respectively.

\textit{Post-processing:} A single public frequency trajectory $\hat{\bm{\omega}}$ is available for post-processing, corresponding to a uniform $20\%$ load increase at all nodes. The first $30$~s of this trajectory are used in the post-processing optimization \eqref{eq:pp full}, while the fidelity of the model is tested on 10 random events. In Alg. \ref{alg:pp-GD}, the range of model parameters is fixed to $\mb{k}\in [1,100]$, $\mb{m}\in [1,40]$ and $\mb{d}\in [1,40]$; the iteration limit $I=500$ and step size $\eta=100.$ The large step size compensates for the order-of-magnitude disparity between frequency and parameter values.

\begin{figure}
    \centering
    \includegraphics[width=0.8\linewidth]{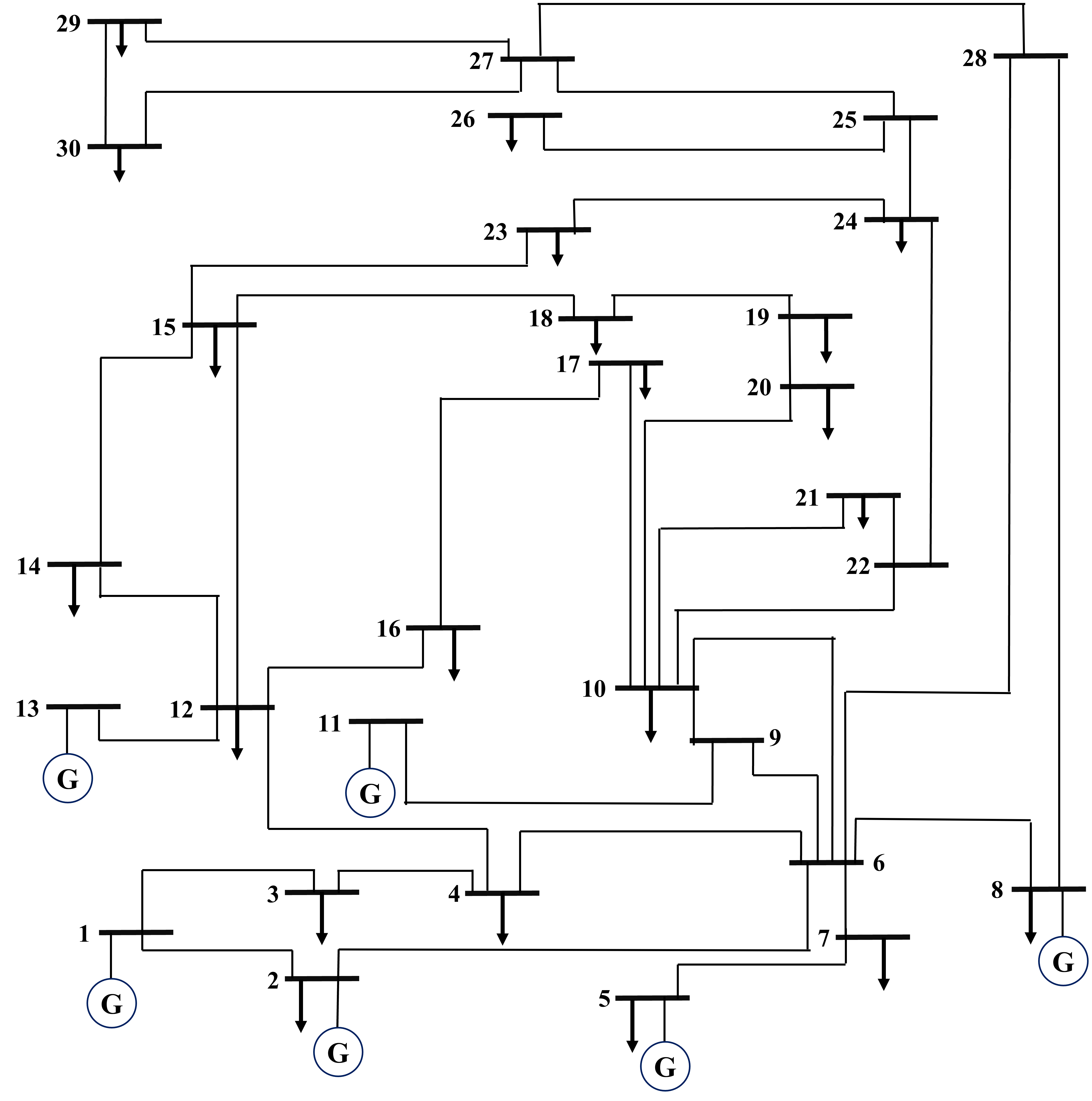}
    \caption{The full IEEE 30-bus system}
    \label{fig:IEEE 30 system og}
\end{figure}

 \begin{figure}
    \centering
    \includegraphics[width=0.8\linewidth]{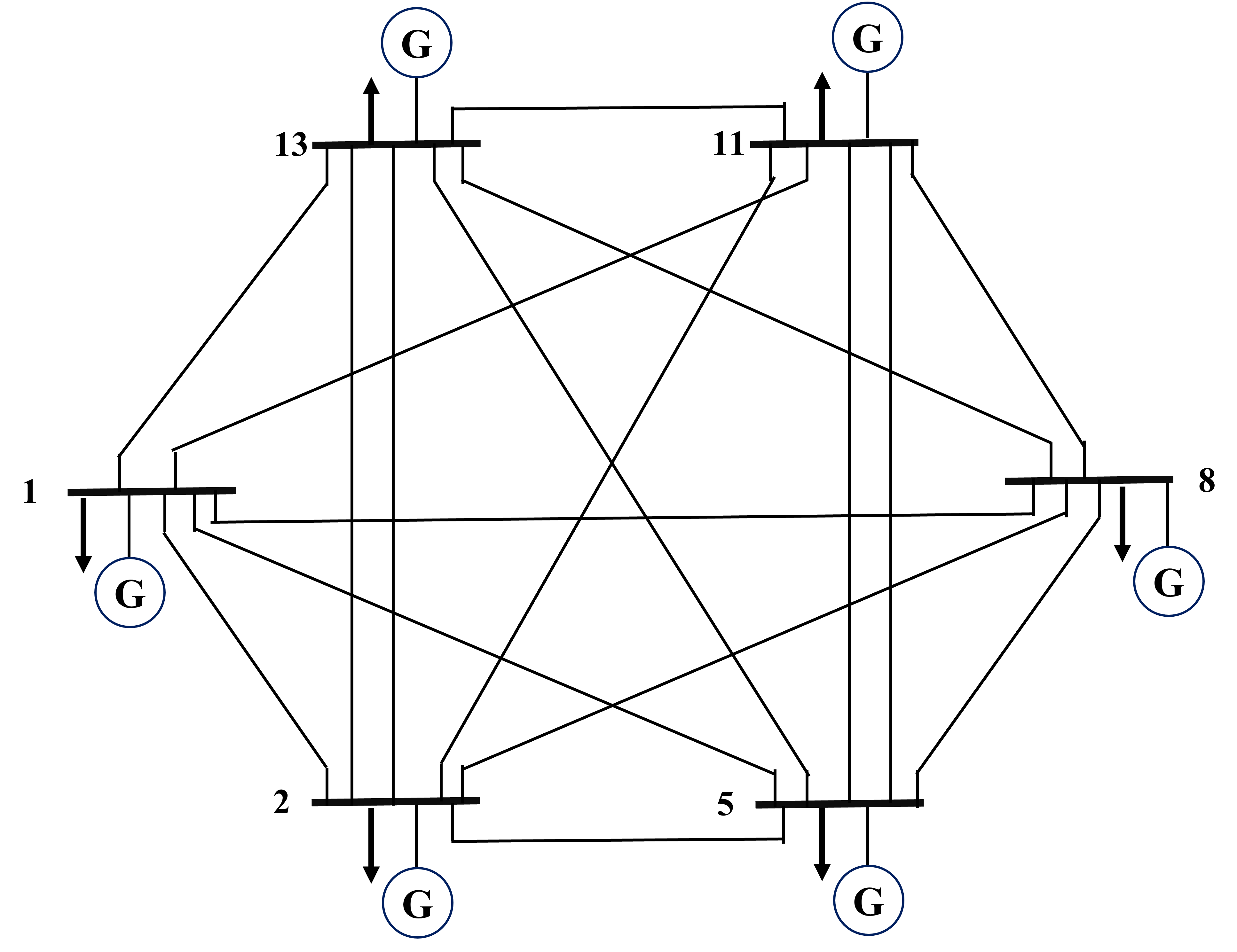}
    \caption{Kron-reduced IEEE 30-bus system}
    \label{fig:IEEE 30 system red}
\end{figure}

\subsection{Evaluation}
For each privacy level $\varepsilon$, Alg.~\ref{alg:pp-GD} is run $100$ times, yielding $100$ synthetic frequency dynamics models. The frequency trajectory mismatch $\int_0^T\|\mathbf{H}\mathbf{x}(t)-\hat{\boldsymbol{\omega}}(t)\|_2^2\,\mathrm{d}t$ over gradient descent iterations is shown in Fig.~\ref{fig:loss_distribution}. Higher initial mismatch appears at stronger privacy requirements (smaller $\varepsilon$), as Step~1 injects more noise into the source parameters. In all cases, however, the mismatch is reduced to nearly zero within $100$ iterations, confirming fidelity recovery across privacy levels.

\begin{figure}
    \centering
    \includegraphics[width=0.95\linewidth]{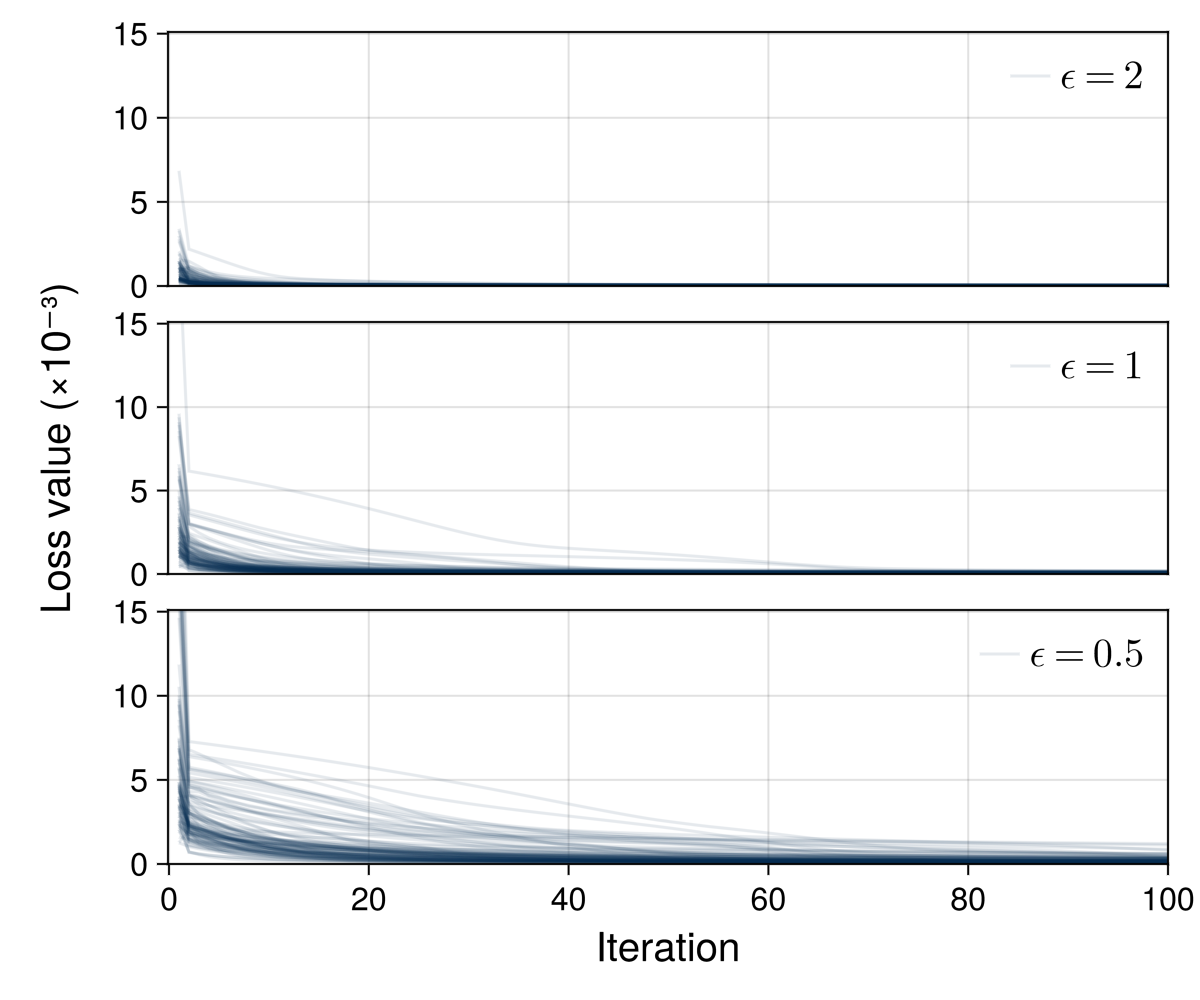}
    \caption{Trajectory mismatch $\int_0^T\|\mathbf{H}\mathbf{x}(t)-\hat{\boldsymbol{\omega}}(t)\|_2^2\,\mathrm{d}t$ over gradient descent iterations for $100$ runs of Alg.~\ref{alg:pp-GD} under three privacy levels $\varepsilon\in\{0.5,1,2\}$.}
    \label{fig:loss_distribution}
\end{figure}
 
The fidelity of the synthesized models is measured by the average trajectory mismatch over $10$ random transients:
\begin{subequations}
    \begin{align}
        L_{\text{lap}}&=\frac{1}{10}\sum_{s=1}^{10}\int_{0}^{30}\|\boldsymbol{\omega}_{\text{lap}}^{s}(t)-\boldsymbol{\omega}_{\text{ref}}^{s}(t)\|_2^2 \; \mathrm{d}t, \\
        L_{\text{pp}}&=\frac{1}{10}\sum_{s=1}^{10}\int_{0}^{30}\|\boldsymbol{\omega}_{\text{pp}}^{s}(t)-\boldsymbol{\omega}_{\text{ref}}^{s}(t)\|_2^2 \; \mathrm{d}t,
    \end{align}
\end{subequations}
where $s$ indexes the transient, $\boldsymbol{\omega}_{\text{ref}}^{s}$ is the true trajectory, $\boldsymbol{\omega}_{\text{lap}}^{s}$ is produced by Steps~1--2 of Algorithm~\ref{alg:pp-GD} (Laplace perturbation and Kron reduction only), and $\boldsymbol{\omega}_{\text{pp}}^{s}$ includes Step~3 (post-processing optimization). The results are shown in Fig. \ref{fig:loss_test_transients}. Without post-processing, $L_{\text{lap}}$ averages $1.5\times10^{-4}$ at $\varepsilon=0.5$, corresponding to roughly $0.012$~Hz of pointwise frequency deviation. Post-processing reduces this to $L_{\text{pp}}\approx2\times10^{-5}$, or about $0.004$~Hz, i.e., a threefold improvement in fidelity. As $\varepsilon$ increases (privacy requirement relaxes), both $L_{\text{lap}}$ and $L_{\text{pp}}$ further decrease, confirming the privacy-fidelity trade-off.

\begin{figure}
    \centering
    \includegraphics[width=0.99\linewidth]{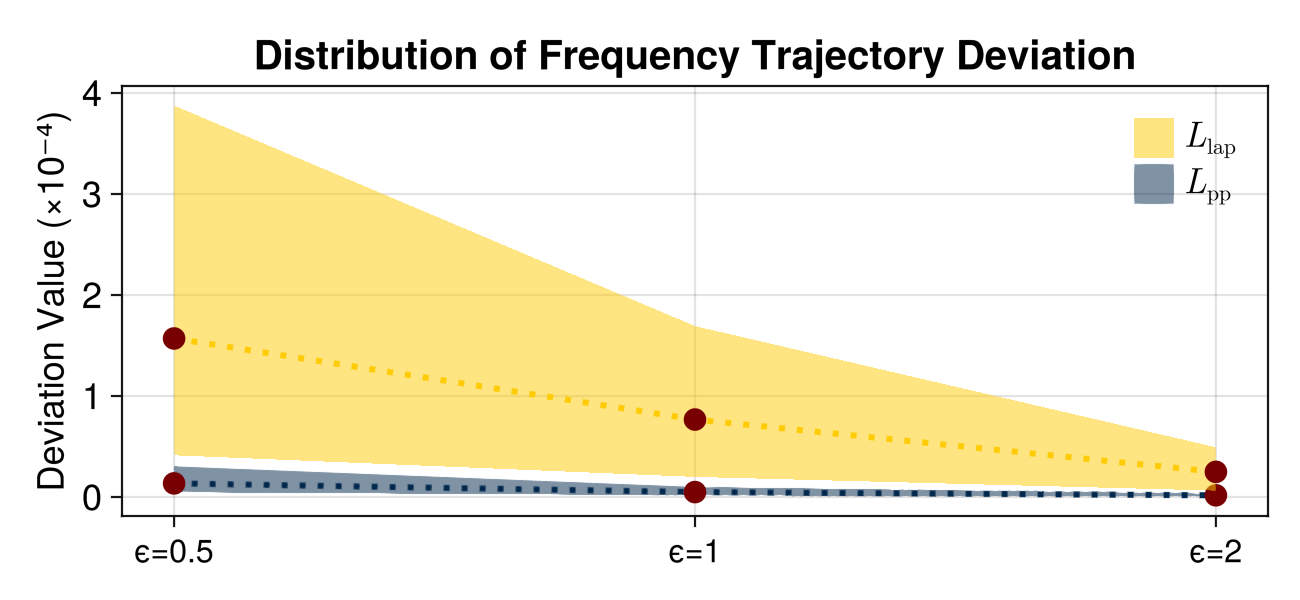}
    \caption{Distribution of $L_{\text{lap}}$ (yellow) and $L_{\text{pp}}$ (blue) across $100$ synthetic models and $10$ randomly generated transients, under three privacy levels $\varepsilon\in\{0.5,1,2\}$. Shaded areas show the $80\%$ intervals; dotted lines show the mean.}
    \label{fig:loss_test_transients}
\end{figure}

Figure~\ref{fig:freq_distribution} shows the frequency trajectories of $100$ synthetic models at $\varepsilon=0.5$ across five out-of-sample transients: Laplace perturbation alone (yellow), post-processed (blue), and the true model (red). While both share the same privacy guarantee, Laplace perturbation alone yields trajectories that deviate substantially from the true one. Post-processing restores fidelity: the blue trajectories are tightly concentrated around the true trajectory, with negligible mean error. Fidelity is highest in transient~1, whose load pattern most closely resembles the post-processing reference $\hat{\bm{\omega}}$. Performance degrades modestly under dissimilar transients (notably nodes~11 and~13 in transient~5), yet the post-processed models consistently outperform the Laplace-perturbed ones across all transients.
 
\begin{figure*}[t]
    \centering
    \includegraphics[width=0.99\linewidth]{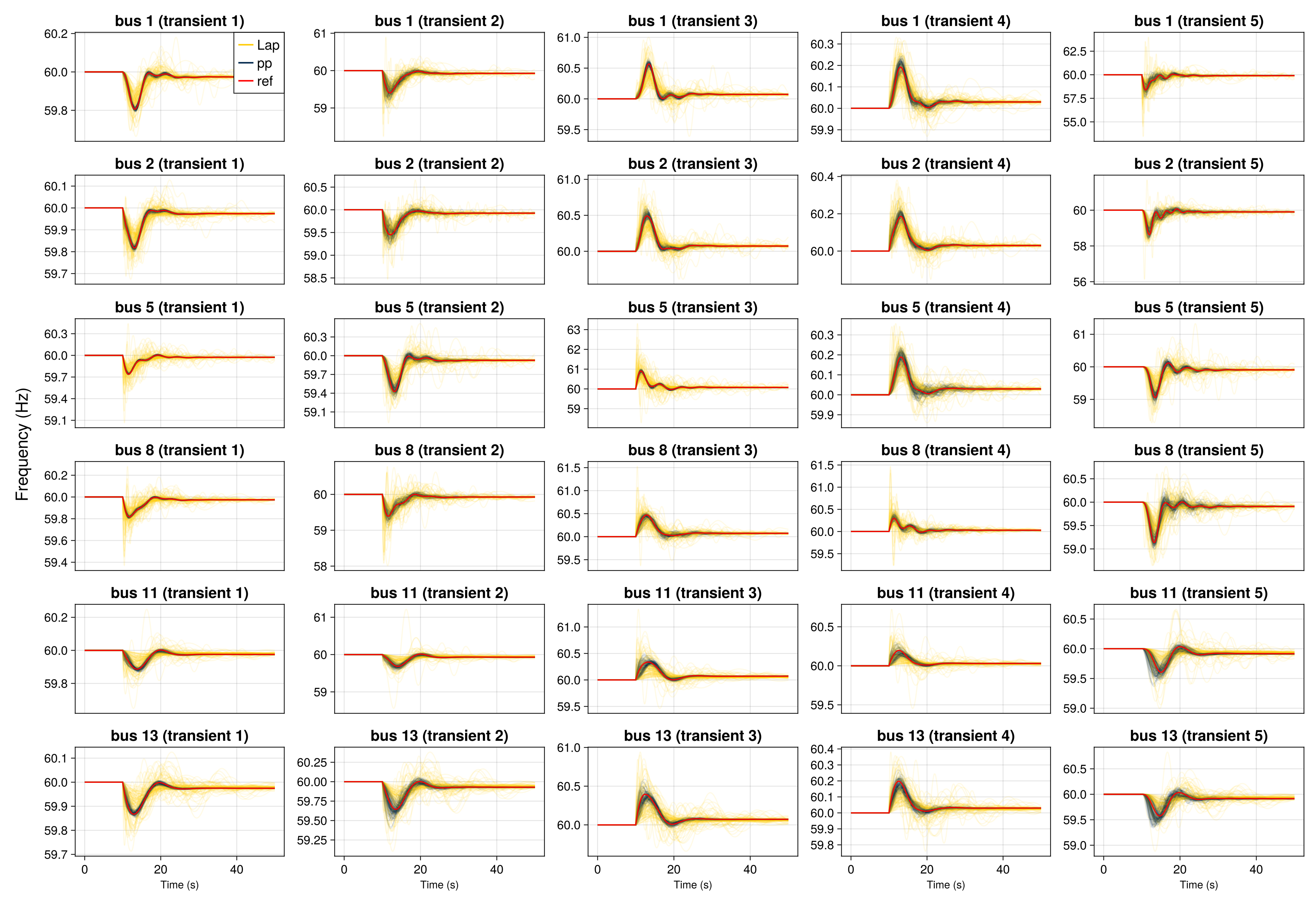}
    \caption{Frequency trajectories of $100$ synthetic models at $\varepsilon=0.5$ across five out-of-sample transients, all starting at $t=10$~s: transient~1, uniform $5\%$ load increase at all nodes ($0.15$~p.u. total); transient~2, load increase of $0.2$~p.u. at nodes~3 and~4; transient~3, load increase of $0.2$~p.u. at nodes~5 and~20; transient~4, full load loss at nodes~10 and~30 ($0.17$~p.u. total); transient~5, load addition of $0.5$~p.u. at bus~1. Yellow: Laplace-perturbed model (Steps~1--2 only). Blue: post-processed model (Steps~1--3). Red: true model.}
    \label{fig:freq_distribution}
\end{figure*}


Figures~\ref{fig:k_distribution}--\ref{fig:D_distribution} show the distributions of the synthetic model parameters $\tilde{\mb{k}}_{\text{red}}$, $\tilde{\mb{M}}_\gamma$, and $\tilde{\mb{D}}_\gamma$ across $100$ samples, before and after post-processing, under each privacy level $\varepsilon$. Fig.~\ref{fig:k_distribution} shows the distribution of the power flow sensitivity vector $\mb{k}$ under different privacy levels $\varepsilon$. As $\varepsilon$ decreases (i.e., stronger privacy protection), more noise is injected, leading to a wider spread in the distributions. This effect is reflected in the larger interquartile ranges and longer whiskers of both the blue and yellow box plots. For certain lines, such as 2--3 and 5--6, the post-processing optimization substantially shifts the values of $\mb{k}$, indicating that the power flow sensitivities on these lines are essential for restoring the true frequency trajectories. Some power flow sensitivities like $k_{56}$ and $k_{46}$ are especially important, as the interquartile range of the post-processed sensitivities is even wider than that before post-processing. In contrast, for lines such as 1--2, the distributions before and after post-processing remain similar, suggesting that the changes of $\mb{k}$ on these lines are not necessary to recover the system dynamics. Finally, both the Laplace-perturbed sensitivities $\tilde{\mb{k}}_{\text{red}}^{0}$ and the post-processed sensitivities $\tilde{\mb{k}}_{\text{red}}$ have mean values close to the true sensitivities, which is consistent with the zero-mean property of the Laplace noise.

 \begin{figure}
     \centering
     \includegraphics[width=0.99\linewidth]{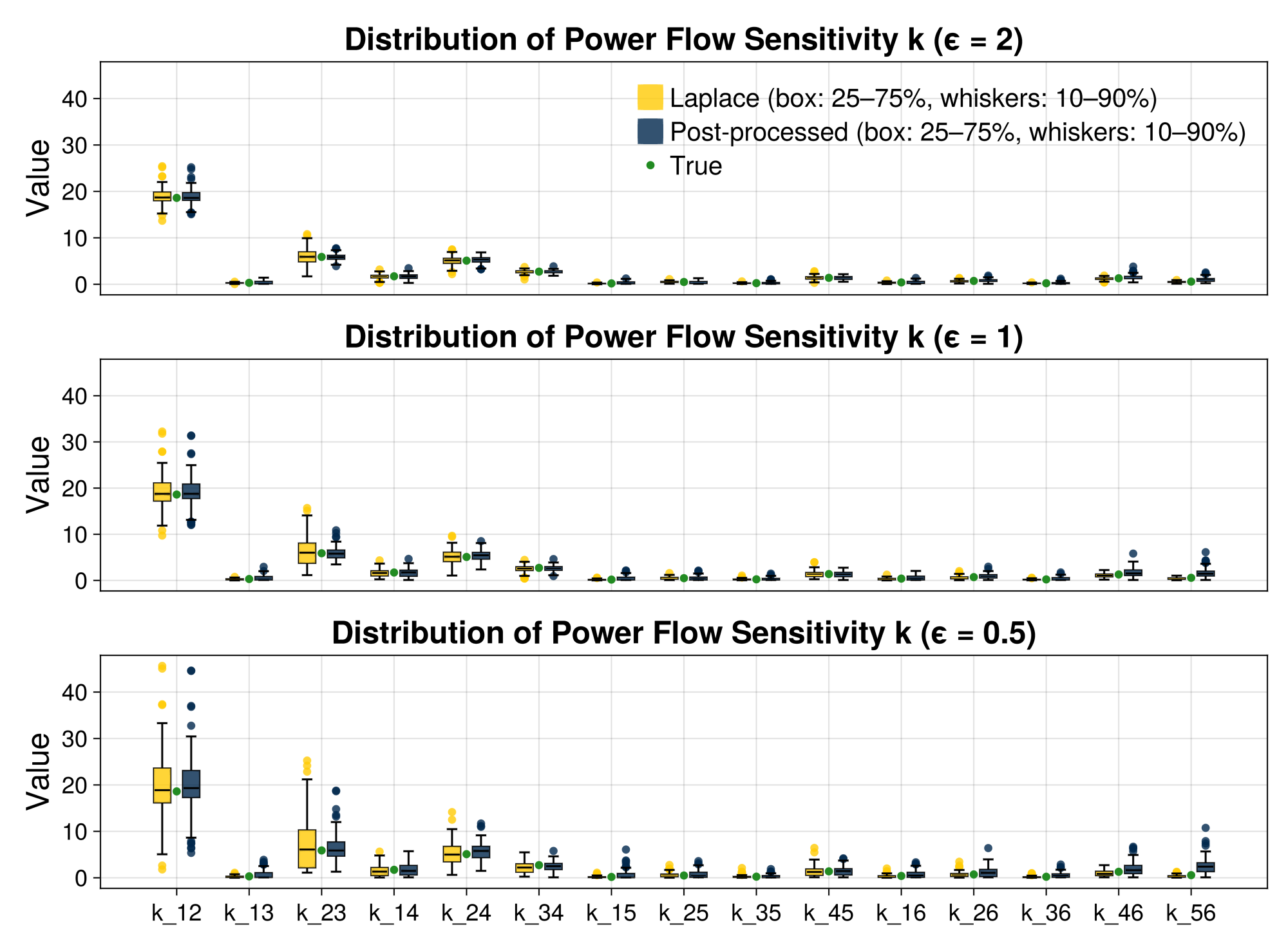}
     \caption{Distributions of power flow sensitivities $\tilde{\mb{k}}_{\text{red}}^0$ (yellow, Laplace only) and $\tilde{\mb{k}}_{\text{red}}$ (blue, post-processed) across $100$ samples, under three privacy levels $\varepsilon\in\{0.5,1,2\}$. Green dots indicate the true sensitivities $\mb{k}_{\text{red}}$.} 
     \label{fig:k_distribution}
 \end{figure}

The distribution of generator inertia is shown in Fig.~\ref{fig:M_distribution}. The interquartile ranges of both the yellow and blue boxes increase as $\varepsilon$ decreases (stronger privacy). With a weak privacy guarantee ($\varepsilon=2$), the Laplace noise magnitude is small, and the obfuscated inertia may still be close to the true inertia. With stronger privacy guarantees ($\varepsilon\in\{1,0.5\}$), the interquartile ranges and whiskers widen across all generators, making it more difficult to identify the true inertia. The inertia of generators~5 and~13 is substantially shifted by post-processing, indicating that these generators are critical for restoring the true frequency trajectories. The damping distributions in Fig.~\ref{fig:D_distribution} follow the same pattern: interquartile ranges widen as $\varepsilon$ decreases, and post-processing consistently tightens the distributions around the true damping values across all generators, indicating that damping is uniformly critical for restoring frequency trajectories.

\begin{figure}
 \centering
 \includegraphics[width=0.85\linewidth]{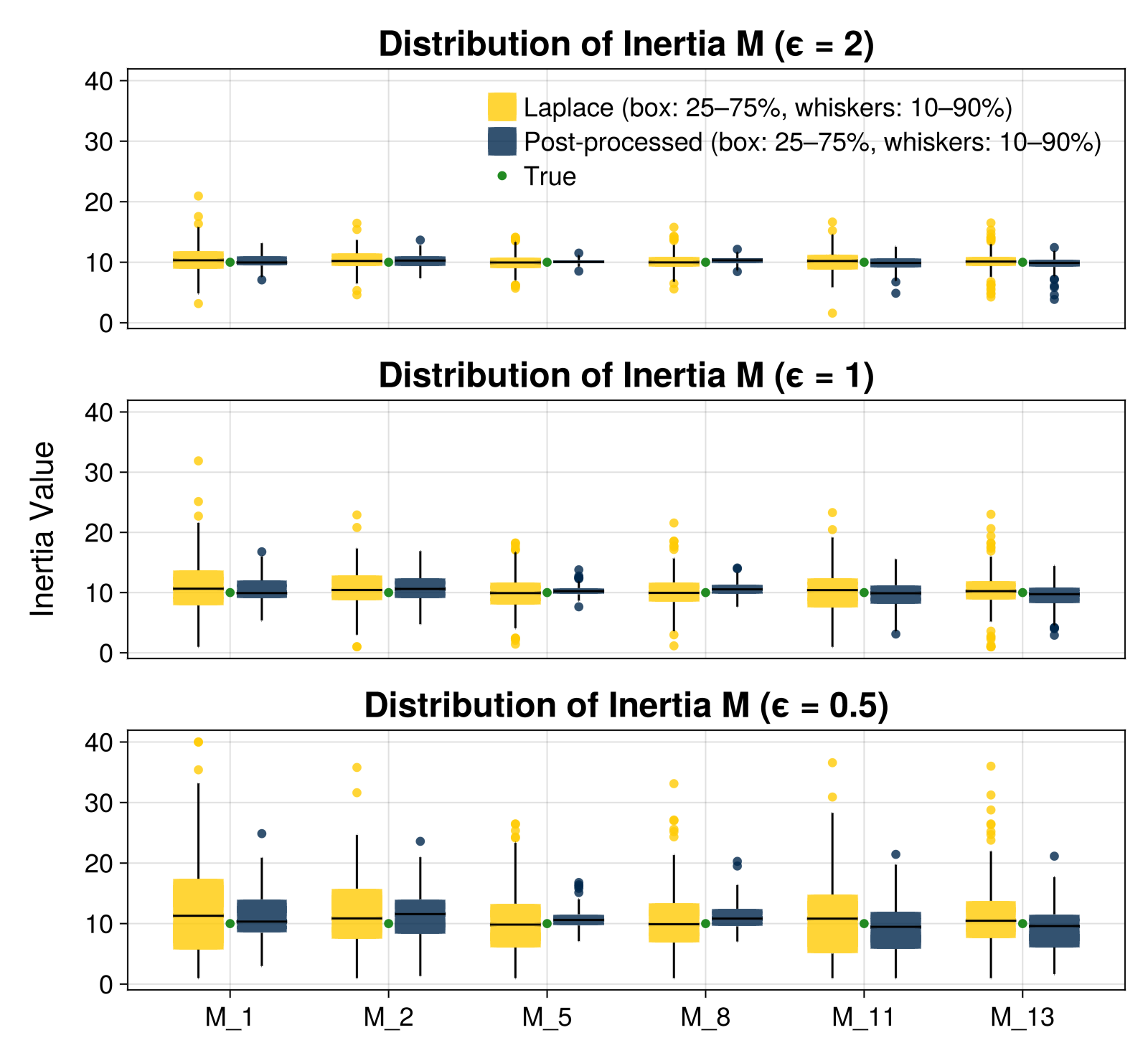}
 \caption{Distributions of generator inertia $\tilde{\mb{M}}_{\gamma}^0$ (yellow, Laplace only) and $\tilde{\mb{M}}_{\gamma}$ (blue, post-processed) across $100$ samples, under three privacy levels $\varepsilon\in\{0.5,1,2\}$. Green dots indicate the true inertia $\mb{M}_{\gamma}$.}
 \label{fig:M_distribution}
\end{figure}

\begin{figure}
 \centering
 \includegraphics[width=0.85\linewidth]{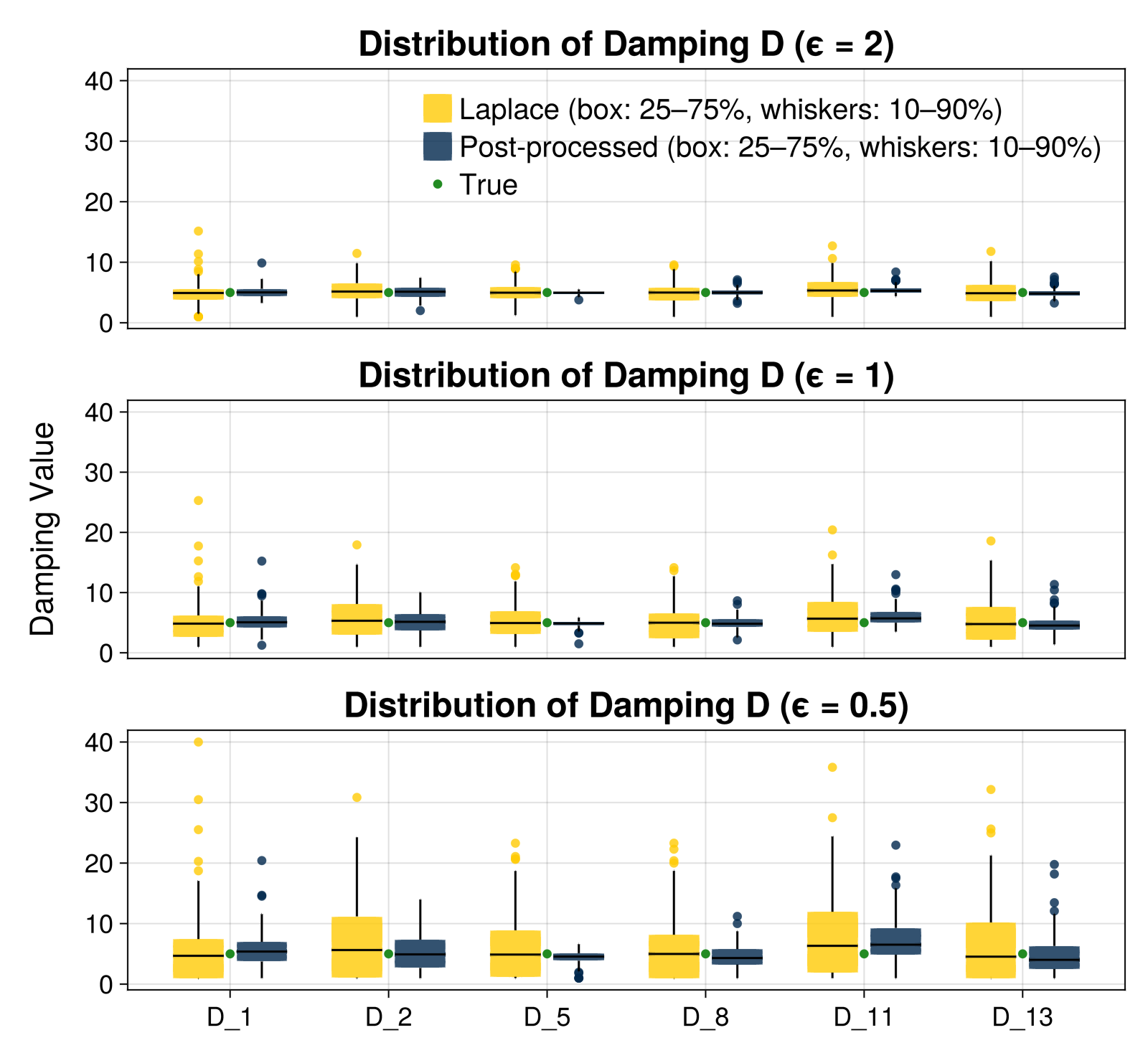}
 \caption{Distributions of generator damping $\tilde{\mb{D}}_{\gamma}^0$ (yellow, Laplace perturbation only) and $\tilde{\mb{D}}_{\gamma}$ (blue, post-processed) across $100$ samples, under three privacy levels $\varepsilon\in\{0.5,1,2\}$. Green dots indicate the true damping $\mb{D}_{\gamma}$.}
 \label{fig:D_distribution}
\end{figure}

\section{Conclusion}

We develop a principled framework for synthesizing power grid dynamic models from real-world systems with formal DP guarantees on the source grid parameters. Applied to grid frequency dynamics, the framework eliminates the risk of reverse engineering network parameters from Kron-reduced networks via calibrated perturbation of the source admittances. An ODE-constrained post-processing optimization then recovers modeling fidelity lost to DP noise while preserving the privacy guarantee, yielding synthetic models that are both privacy-preserving and useful for downstream analysis. In doing so, this work extends DP and optimization-based post-processing from steady-state grid optimization data to power grid dynamics, protecting the dynamical model itself while preserving its frequency response under load transients.

Experiments on the IEEE 30-bus system confirm the privacy-fidelity trade-off: stricter privacy requirements increase frequency trajectory deviation, yet post-processing consistently recovers fidelity across all privacy levels. The parameter distributions reveal interpretable impacts of post-processing optimization: lines and generators critical for frequency dynamics undergo larger post-processing adjustments, while less influential components retain wider variability. 

Future work will focus on testing the framework in applications to synchronous generator, PSS, AVR, and inverter models, and to linear dynamical systems more broadly.

\bibliographystyle{IEEEtran} 
\bibliography{TPWS}

\end{document}